
\documentclass[sigconf, nonacm]{acmart}
\usepackage{amsmath}
\usepackage{xcolor}
\usepackage{colortbl}
\usepackage{multirow}
\setlength{\marginparwidth}{2cm}
\usepackage{todonotes}
\usepackage[ruled,vlined,linesnumbered]{algorithm2e}

\usepackage[noend]{algpseudocode}
\usepackage{makecell}
\usepackage{caption}
\usepackage{subcaption}
\usepackage{amsthm}    
\usepackage{hyperref}  
\usepackage{xspace}
\newcommand{\system}{\textsf{ForeSight}\xspace}
\newcommand{\aspn}{ASPN\xspace}
\newcommand{\mtfs}{MTFS\xspace}


\newtheorem{ruledef}{Rule}
\newtheorem{theorem}{Theorem}

\algrenewcommand\algorithmicrequire{\textbf{Input:}}
\algrenewcommand\algorithmicensure{\textbf{Output:}}

\newcommand\vldbdoi{XX.XX/XXX.XX}
\newcommand\vldbpages{XXX-XXX}
\newcommand\vldbvolume{14}
\newcommand\vldbissue{1}
\newcommand\vldbyear{2020}
\newcommand\vldbauthors{\authors}
\newcommand\vldbtitle{\shorttitle} 
\newcommand\vldbavailabilityurl{URL_TO_YOUR_ARTIFACTS}
\newcommand\vldbpagestyle{plain} 

\newcommand{\showDOI}[1]{\unskip}

\setlength{\textfloatsep}{0.3ex}
\setlength{\intextsep}{0.3ex}
\setlength{\dbltextfloatsep}{0ex}
\addtolength{\topskip}{-1mm}
\addtolength{\parskip}{-0.1mm}
\setlength{\lineskip}{-0.24mm}
\addtolength{\abovedisplayskip}{-2mm}
\addtolength{\belowdisplayskip}{-1.9mm}
\addtolength{\textheight}{6mm}
\addtolength{\topmargin}{-3mm}
\addtolength{\leftmargin}{8mm}

\begin{document}

\title{ForeSight: A Predictive-Scheduling Deterministic Database}

\author{Junfang Huang}
\affiliation{%
    \institution{Harbin Institute of Technology}
    \streetaddress{92 West Dazhi St}
    \city{Harbin}
    \state{Heilongjiang}
    \country{China}
}
\email{24S103299@stu.hit.edu.cn}

\author{Yu Yan}
\affiliation{%
    \institution{Harbin Institute of Technology}
    \streetaddress{92 West Dazhi St}
    \city{Harbin}
    \state{Heilongjiang}
    \country{China}
}
\email{yuyan@hit.edu.cn}

\author{Hongzhi Wang}
\affiliation{%
    \institution{Harbin Institute of Technology}
    \streetaddress{92 West Dazhi St}
    \city{Harbin}
    \state{Heilongjiang}
    \country{China}
}
\email{wangzh@hit.edu.cn}

\author{Yingze Li}
\affiliation{%
    \institution{Harbin Institute of Technology}
    \streetaddress{92 West Dazhi St}
    \city{Harbin}
    \state{Heilongjiang}
    \country{China}
}
\email{23B903046@stu.hit.edu.cn}

\author{Jinghan Lin}
\affiliation{%
    \institution{Harbin Institute of Technology}
    \streetaddress{92 West Dazhi St}
    \city{Harbin}
    \state{Heilongjiang}
    \country{China}
}
\email{linjinghan@stu.hit.edu.cn}

\begin{abstract}
  Deterministic databases enable scalable replicated systems by executing transactions in a predetermined order. However, existing designs fail to capture transaction dependencies, leading to insufficient scheduling, high abort rates, and poor resource utilization. 
  By addressing these challenges with lightweight conflict prediction and informed scheduling, we present \system, a high-performance deterministic database system. Our system has three core improvements: 
  (1) We design an Association Sum-Product Network to predict potential transaction conflicts, providing the input for dependency analysis without pre-obtained read/write sets.
  (2) We enhance the storage engine to integrate multi-version-based optimization, improving the execution process and fallback strategy to boost commit rates and concurrency.
  (3) We propose a matrix two-pass forward scan algorithm that performs dependency analysis to generate conflict-aware schedules, significantly reducing scheduling overhead.
  Experimental results on multiple benchmarks show that \system achieves up to 2 times higher throughput on skewed workloads and maintains strong performance under contention, demonstrating that predictive scheduling substantially improves deterministic database scalability.
\end{abstract}

\maketitle

\pagestyle{\vldbpagestyle}
\begingroup\small\noindent\raggedright\textbf{PVLDB Reference Format:}\\
\vldbauthors. \vldbtitle. PVLDB, \vldbvolume(\vldbissue): \vldbpages, \vldbyear.\\
\href{https://doi.org/\vldbdoi}{doi:\vldbdoi}
\endgroup
\begingroup
\renewcommand\thefootnote{}\footnote{\noindent
This work is licensed under the Creative Commons BY-NC-ND 4.0 International License. Visit \url{https://creativecommons.org/licenses/by-nc-nd/4.0/} to view a copy of this license. For any use beyond those covered by this license, obtain permission by emailing \href{mailto:info@vldb.org}{info@vldb.org}. Copyright is held by the owner/author(s). Publication rights licensed to the VLDB Endowment. \\
\raggedright Proceedings of the VLDB Endowment, Vol. \vldbvolume, No. \vldbissue\ %
ISSN 2150-8097. \\
\href{https://doi.org/\vldbdoi}{doi:\vldbdoi} \\
}\addtocounter{footnote}{-1}\endgroup

\ifdefempty{\vldbavailabilityurl}{}{
\vspace{.3cm}
\begingroup\small\noindent\raggedright\textbf{PVLDB Artifact Availability:}\\
The source code, data, and/or other artifacts have been made available at \url{https://github.com/AvatarTwi/foresight_system}.
\endgroup
}

\section{Introduction}
\label{sec:intro}

Deterministic Concurrency Control (DCC) offers a simple and effective way to build distributed databases that are both highly available and scalable~\cite{DCC,abadi2018overview,wang2025comprehensive,ren2014evaluation,han2025optimized}. The major idea is to ensure that different replicas consistently produce the same results, as long as the same input transactions are provided. 
Executing transactions in a predetermined order eliminates runtime nondeterminism and avoids complex conflict resolution mechanisms such as locking or coordination protocols. 
DCC offers several practical advantages. (1) It improves execution efficiency by avoiding distributed locking and synchronous replication. (2) It reduces network overhead by controlling transaction order instead of propagating data tuple changes. (3) It simplifies failure recovery, as a consistent state can be restored by replaying input logs without additional coordination. These features make DCC a promising foundation for distributed databases that aim to combine determinism, high concurrency, and fault tolerance, and have inspired adoption in modern systems such as OpenGauss~\cite{li2021opengauss}.

Specifically, existing deterministic databases can be broadly categorized into pessimistic and optimistic approaches~\cite{DCC,abadi2018overview,occ,wang2016mostly,Tictoc}. Pessimistic deterministic databases, including BOHM~\cite{BOHM}, PWV~\cite{PWV}, and Calvin~\cite{calvin}, enforce a fixed transaction order before execution, typically through dependency graphs or ordered locks, to ensure consistent execution across replicas. For example, BOHM builds a dependency graph based on pre-obtained read and write sets to determine execution order. Calvin enforces a global order of transaction execution using lock-based scheduling to achieve deterministic concurrency control. Overall, pessimistic databases require prior knowledge of the read/write sets of transactions and typically involve at least two rounds of execution. In contrast, optimistic deterministic databases, such as Aria~\cite{aria} and AriaER~\cite{ariaER}, execute batches of transactions in parallel on a shared snapshot without prior knowledge of read and write sets. Then, optimistic approaches ensure consistent commit status through deterministic validation and often incorporate a reordering mechanism to maximize the number of successful commits within a batch.

Although existing deterministic databases have demonstrated the potential for efficient and scalable transaction processing, they still face three key challenges that limit the performance under high concurrency and skewed access patterns.

(1)~\textbf{Costly requirement of read/write sets for conflict detection.} Pessimistic approaches~\cite{BOHM, PWV, calvin} rely on pre-obtained read/write sets to construct dependency graphs and detect transaction conflicts. However, acquiring accurate read/write sets often requires pre-execution, which is costly and complicates deployment in real production environments.

(2)~\textbf{High abort rates under contention.} Optimistic approaches adopt an ``execute-then-validate'' strategy~\cite{aria}, running transactions speculatively on a snapshot. However, because conflicts are only detected after execution, workloads with high contention or skewed access patterns can lead to excessive aborts, wasting both computation and memory resources.

(3)~\textbf{Ineffective reordering for transaction scheduling.} Existing deterministic databases schedule transactions inefficiently due to ineffective reordering strategies. Optimistic deterministic databases~\cite{aria} attempt to reorder transactions using coarse rules (e.g., aborting transactions that depend on earlier ones) without analyzing whether dependencies actually form cycles, leading to unnecessary aborts. Pessimistic deterministic databases~\cite{BOHM,PWV} build fine-grained dependency graphs, but their graph-based reordering is computationally expensive, making them impractical at large batch sizes or high throughput.

These challenges all stem from the absence of predictive conflict analysis before execution. If potential conflicts could be captured before execution, systems could avoid relying on execution to obtain read/write sets, eliminate redundant aborts, and reduce scheduling overhead.

Motivated by this, we propose \system, a deterministic database that introduces \emph{predictive scheduling} as its central idea. With foresight into conflicts, our system transforms blind execution into conflict-aware scheduling. \system incorporates three complementary techniques:

(1)~\textbf{Conflict prediction with \aspn.} To eliminate the costly requirement for pre-obtained read/write sets, \system builds an \emph{Association Sum-Product Network (ASPN)} over table relations to capture attribute correlations and access patterns, enabling fast, accurate conflict prediction without pre-executing transactions.

(2)~\textbf{Multi-version-based optimization for reduced aborts.} 
Through multi-version-based optimization, \system both relaxes commit constraints and enhances the fallback strategy, allowing safe commits under version conflicts and more effective handling of residual conflicts, which together significantly improve commit rates under contention.

(3)~\textbf{Lightweight reordering for efficient scheduling.} To improve the efficiency of transaction scheduling, we propose \emph{Matrix Two-pass Forward Scan (\mtfs)}, a lightweight reordering algorithm designed to support efficient scheduling. \mtfs performs two scans to construct a simplified dependency structure, enabling the system to identify cycle-free schedules with minimal overhead.

Together, these three techniques enable \emph{predictive scheduling}, addressing the key challenges of existing deterministic databases.

\noindent\textbf{Contributions.} This paper makes the following contributions: 

\begin{itemize}
    \item We explore \emph{predictive scheduling} in deterministic databases, where it specifically means having foresight of transaction conflicts before execution to guide scheduling (Section~\ref{sec:overview}). 

    \item We integrate three complementary techniques to enable predictive scheduling: a lightweight conflict predictor \textbf{\aspn} (Section~\ref{sec:ACP}), a multi-version-based optimization to reduce aborts (Section~\ref{sec:optimization}), and an efficient reordering algorithm \textbf{\mtfs} for effective scheduling (Section~\ref{sec:reorder}).

    \item Extensive evaluations demonstrate that \system consistently outperforms state-of-the-art deterministic databases. \system achieves up to $2\times$ throughput improvement on skewed YCSB workloads and maintains superior stability under high-conflict TPC-C workloads (Section~\ref{sec:eval}). 
\end{itemize}


\section{BACKGROUND}
\label{sec:background}
In this section, we introduce the typical architecture of deterministic databases by examining representative implementations.
Deterministic databases typically follow a five-layer architecture~\cite{aria,calvin, PWV, BOHM,abadi2018overview}: input, sequencing, scheduling, execution, and storage. These layers enable deterministic execution via message passing and coordinated interactions. Then, we will discuss the design principles and limitations of existing works from the perspective of the above five layers. 



\noindent\textbf{Input Layer.}
Existing deterministic databases require the input layer to supply transactions in a form that meets the protocol’s determinism constraints~\cite{abadi2018overview}. Many approaches rely on predefined logic transactions with fully known execution behavior~\cite{BOHM,PWV,calvin}, enabling pre-obtained read/write sets and pre-scheduling, but limiting support for interactive or ad-hoc SQL transactions~\cite{shanbhag2017robust}.

\noindent\textbf{Sequencing Layer.}
The sequencing layer assigns a global order to all transactions to eliminate nondeterministic behavior (e.g., randomness, timestamps) and ensure consistent results across replicas. Most existing systems simply order transactions without leveraging dependencies or reducing contention~\cite{aria,calvin,PWV,BOHM,abadi2018overview}, deferring conflict handling to later phases and risking higher abort rates under skewed or high-contention workloads.

\noindent\textbf{Scheduling Layer.}
The scheduling layer constructs dependency relationships based on the global order and coordinates concurrent execution. Most systems adopt dependency-graph-based~\cite{BOHM,PWV}, ordered-locking-based~\cite{calvin}, or validation-based~\cite{aria} approaches, each with trade-offs in concurrency and overhead. For example, BOHM builds static dependency graphs with placeholders for writes and uses Multi-Version Concurrency Control (MVCC) to decouple reads from writes~\cite{BOHM}; PWV partitions transactions, builds per-partition dependency graphs, and enables early commit~\cite{PWV}; Calvin acquires locks in a predefined order to follow the global sequence~\cite{calvin}; and Aria executes transactions in parallel on a snapshot, followed by global validation and optional lock-based fallback~\cite{aria}. However, coarse-grained validation can cause excessive aborts under contention~\cite{aria}, while fine-grained graphs incur high construction costs for large batches~\cite{BOHM,PWV,calvin}.

\noindent\textbf{Execution Layer.}
The execution layer processes transactions following the dependencies determined by the scheduler. For example, BOHM executes in parallel using placeholders with version waiting~\cite{BOHM}; PWV processes shards in parallel with early write visibility~\cite{PWV}; Calvin executes only after acquiring all required locks~\cite{calvin}; and Aria runs transactions on a snapshot with batch validation and retry~\cite{aria}. However, under high contention, snapshot-based execution can suffer from high abort rates, while other approaches may lead to deadlocks or excessive waiting~\cite{aria,calvin,PWV,BOHM,Tictoc,diaconu2013hekaton}.

\noindent\textbf{Storage Layer.}
The storage model is closely tied to concurrency control, with multi-version and single-version being the two main approaches. BOHM and PWV employ multi-version storage to support high concurrency and non-blocking reads~\cite{sadoghi2014reducing,wang2024gria,BOHM,PWV}, while Calvin and Aria adopt single-version storage~\cite{aria,calvin}. The choice of storage model directly affects read/write performance and scheduling complexity~\cite{aria,calvin,kallman2008h,BOHM}.

Overall, although existing deterministic databases benefit from layered designs that enable efficient transaction execution, they still face some limitations, which affect the adaptability to dynamic workloads and efficiency under high contention and skew.


\section{SYSTEM OVERVIEW}
\label{sec:overview}

In this section, we provide an overview of \system, including its core idea and architecture.

\noindent\textbf{Core Idea.} 
Deterministic databases aim to construct a globally consistent execution order while improving throughput and scalability. However, existing approaches face three major limitations. 
First, pessimistic approaches depend on pre-obtained read/write sets to construct dependency graphs, which often requires costly pre-execution and is impractical in real-world deployments. 
Second, optimistic approaches detect conflicts only after execution, leading to excessive aborts and wasted resources under high-contention workloads. 
Third, both approaches rely on ineffective reordering strategies. Optimistic systems use coarse rules that cause unnecessary aborts, while pessimistic systems adopt fine-grained graph construction that incurs high overhead. 
\system addresses these limitations by integrating three complementary techniques into a unified architecture: lightweight conflict prediction to remove the need for pre-obtained read/write sets, multi-version storage to reduce aborts under contention, and a lightweight reordering algorithm to enable efficient scheduling.

\noindent\textbf{Architecture.} 
The motivation of \system's architecture is to organize conflict prediction, reordering, and multi-versioning into a coherent workflow that supports efficient scheduling and reduces aborts under contention. Figure~\ref{fig:overview} shows the architecture of \system, which follows the general structure of deterministic databases while integrating \emph{three complementary techniques} for predictive scheduling.
(1) \emph{Input layer}: accepts general SQL transactions without requiring pre-obtained read/write sets, batching them for deterministic processing.  
(2) \emph{Sequencing layer}: applies conflict prediction and the \mtfs reordering algorithm to globally order and prune transactions, reducing unnecessary aborts.  
(3) \emph{Scheduling layer}: performs validation-based scheduling to ensure determinism while filtering transactions that still exhibit conflict risk.  
(4) \emph{Execution layer}: runs transactions in parallel on a consistent snapshot, buffering updates as new versions until commit.  
(5) \emph{Storage layer}: maintains multiple versions during execution, merges committed updates into persistent storage, and performs garbage collection at the end of each round.  

Next, we present the detailed workflow of \system, including five phases: prediction, execution, commit, fallback, and garbage collection.

\noindent\textbf{Prediction Phase.} This phase involves the input and sequencing layers. As illustrated in Figure~\ref{fig:overview}, \system first employs the Association Sum-Product Network (\aspn) model introduced in Section~\ref{sec:ACP} to capture data correlations and support fast conflict prediction. Then, \system applies the Matrix Two-pass Forward Scan (\mtfs) algorithm (Section~\ref{sec:reorder}) to reorder and filter transactions \emph{before execution}, deferring high-conflict ones to the fallback stage. This predictive scheduling reduces wasted work from executing transactions likely to be aborted.

\noindent\textbf{Execution Phase.} \system first generates a snapshot from the current persistent storage (Section~\ref{sec:optimization}). Each transaction executes on this snapshot, recording its updates in a local version chain as described in Section~\ref{sec:MultiVersion}. Since updates are buffered locally and not immediately applied to the database, all transactions read from a consistent snapshot throughout the execution phase.

\noindent\textbf{Commit Phase.} Deterministic validation with the \mtfs algorithm (Section~\ref{sec:reorder}) serves as a safeguard for transactions that have already executed. \mtfs constructs a global dependency matrix from the actual read/write sets of all transactions and selects a conflict-free subset for commit. Transactions outside this subset are deferred to the fallback stage, ensuring that any conflicts missed during the prediction phase are resolved.

\noindent\textbf{Fallback Phase.} During the fallback phase, \system executes transactions that failed to commit in earlier stages. The fallback strategy, detailed in Section~\ref{sec2:fallback}, ensures that these deferred transactions are re-executed under a conflict-free schedule.

\noindent\textbf{Garbage Collection Phase.} Multi-version-based optimization generates historical versions that must be reclaimed to prevent unbounded growth. \system adopts an epoch-based strategy~(Section~\ref{sec2:fallback}) to discard versions older than the latest active snapshot, ensuring both consistency and storage efficiency~\cite{nguyen2023survey,bottcher2019scalable,BOHM}. Finally, committed updates are merged into persistent storage to keep the database consistent and up to date.

\begin{figure}[!t]
    \includegraphics[width=\linewidth]{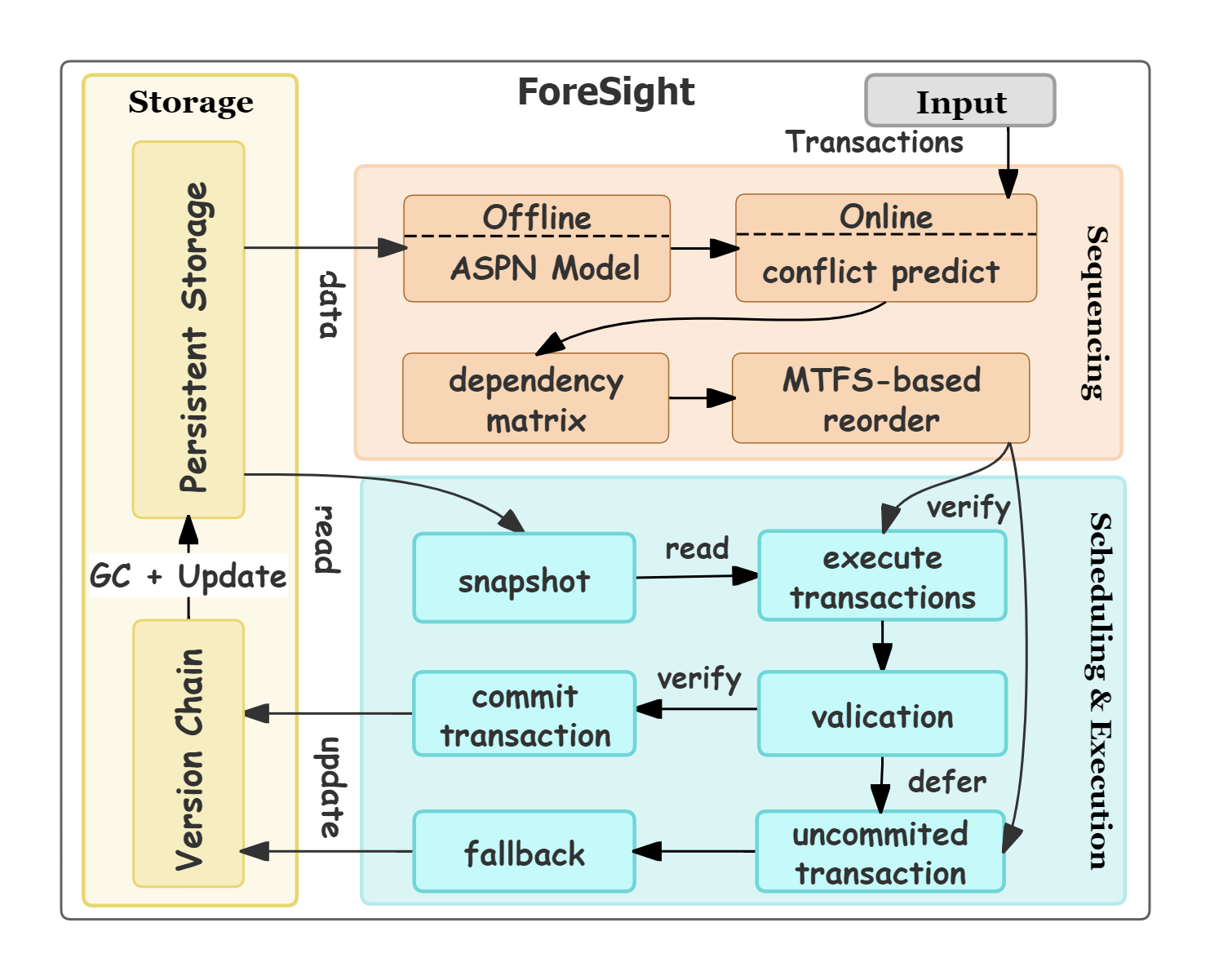}
    \caption{The Workflow of \system.}
    \label{fig:overview}
\end{figure}

\section{THE CONFLICT PREDICTION ALGORITHM}\label{sec:ACP}

Conflict detection is essential for deterministic execution, but existing methods rely on pre-executing transactions to extract read/write sets, incurring high overhead. To address this, we adopt predictive conflict analysis, which provides the sequencing layer with early dependency information for effective scheduling. In this section, we present Association Sum-Product Network (\aspn), a fast, lightweight, and incrementally maintainable model for conflict prediction. We formally define the prediction problem (Section~\ref{sec1:problem}), present the \aspn model (Section~\ref{sec1:aspn-model}), describe its prediction mechanism (Section~\ref{sec1:ACP}), and explain its incremental maintenance mechanism (Section~\ref{sec1:update}).

\subsection{Problem Definition}\label{sec1:problem}

We define transaction conflict prediction as a learning and inference problem that classifies dependencies between transactions. The goal is to determine whether any two transactions will incur read-write conflicts during concurrent execution, which guides the scheduler in performing safe and effective concurrency control.

\textit{Transaction Dependency.} Let $T = \{T_1, T_2, \ldots, T_n\}$ be a set of transactions, each defined as a tuple $(S_i, P_R^i, P_W^i)$, where $S_i$ denotes the target table, and $P_R^i$, $P_W^i$ represent read and write predicates, respectively. The read and write sets of $T_i$ are:
$$
R(T_i) = \{ t \in S_i \mid P_R^i(t) = \texttt{true} \}
,
W(T_i) = \{ t \in S_i \mid P_W^i(t) = \texttt{true} \}.
$$

The dependency between $T_i$ and $T_j$, where $T_i$ logically precedes $T_j$ in the scheduling timeline, is denoted as $D(T_i, T_j)$~\cite{thalheim2013dependencies}. It is classified as \textbf{RAW (Read-After-Write)} if $R(T_j) \cap W(T_i) \ne \emptyset$; \textbf{WAR (Write-After-Read)} if $W(T_j) \cap R(T_i) \ne \emptyset$; \textbf{WAW (Write-After-Write)} if $W(T_j) \cap W(T_i) \ne \emptyset$; and \textbf{IND (Independent)} if none of the above hold.

\textit{Conflict Prediction.}
To predict whether two transactions may conflict, we test for overlap in their read and write regions as defined earlier. For example, a RAW dependency occurs when the write region of transaction $T_i$ intersects with the read region of $T_j$. In practice, these regions are determined by the transactions’ access predicates, which are often expressed as multi-dimensional range queries in SQL, such as \emph{SELECT ... FROM S WHERE A <= a1 AND B <= b2 AND ...}. Such queries correspond to hyperrectangles in a high-dimensional space, making exact intersection checks costly and impractical during runtime.

To reduce this cost, we develop the \textbf{Association Sum-Product Network (\aspn)}, a compact probabilistic model that captures attribute correlations and efficiently estimates predicate overlap, to predict conflicts at runtime with low overhead.

\subsection{The \aspn Model}\label{sec1:aspn-model}

We build on the Sum-Product Network (SPN) as the foundation for \aspn. An SPN is a probabilistic model that factorizes a joint distribution using sum and product nodes~\cite{hilprecht13deepdb}, enabling linear-time inference by partitioning attributes under independence assumptions and decomposing heterogeneous data regions. The main advantage of SPN is its efficient and deterministic probability computation, since complex queries can be answered in time that scales with the network size, and once the structure and parameters are fixed, identical inputs always produce identical outputs. Such stability makes SPN appropriate as a deterministic inference component, where consistent results are essential to ensure determinism. However, SPN is not well-suited for database workloads, as real queries rarely satisfy the strong independence assumptions it relies on, and excessive partitioning often produces large structures. The factorize-split-sum-product network (FSPN)~\cite{flat} partially addresses these issues, but its design is primarily aimed at cardinality estimation rather than the predicate intersection required for conflict prediction.

To overcome these limitations, we propose the Association Sum-Product Network (\aspn), a tree-structured probabilistic model that applies conditional decomposition to groups of highly correlated attributes. This design preserves SPN’s efficiency and determinism while selectively modeling joint distributions for correlated subsets, yielding compact yet expressive representations. \aspn is particularly effective for conflict prediction, where the task is to determine whether two transaction predicates intersect a non-empty region of the data space. By capturing attribute correlations, it avoids the overestimation caused by independence assumptions, thereby reducing false positives and unnecessary aborts. This enables predictive scheduling that reduces transaction aborts and resource waste, ultimately improving throughput under high contention. Furthermore, since \aspn construction depends only on the given training samples and produces a fixed model once built, its inference yields consistent results. Given the same data, both the model and its predictions remain unchanged, making \aspn especially suitable for conflict prediction in deterministic databases.

\noindent\textbf{Structure Definition.}
Formally, given a set of attributes $A$ and record set $T$, \aspn models the joint distribution $\Pr_T(A)$ over $T$ using a tree structure. Each \aspn node $N$ is defined by three components: the target attributes $A_n \subseteq A$ to be modeled, the conditional attributes $C_n \subseteq C$ that provide the context, and the subset of training records $T_n \subseteq T$ associated with the node. According to their structural roles and statistical dependencies, \aspn nodes are categorized into four types:

\noindent\textbf{(1) Decomposition Node ($\oplus$):}
If the record distribution exhibits structural heterogeneity, $T_n$ is partitioned into clusters $T_1, \dots, T_k$, and local models are learned per cluster. The resulting distribution is expressed as a mixture: 
$\Pr_{T_n}(A_n \mid C_n) = \sum_{j=1}^{k} \Pr(T_j \mid T_n) \cdot \Pr_{T_j}(A_n \mid C_n)$. 

\noindent\textbf{(2) Independent Node ($\Box\times\Box$):}
If attributes in $A_n$ are mutually independent over $T_n$, then each attribute is modeled independently under the context $C_n$. The distribution is factorized as 
$\Pr_{T_n}(A_n \mid C_n) = \prod_{a \in A_n} \Pr_{T_n}(a \mid C_n)$.

\noindent\textbf{(3) Joint Node ($\Box\Box$):}
If attributes in $A_n$ are not independent, then a joint model is trained over $A_n$ under the context $C_n$, expressed as $\Pr_{T_n}(A_n \mid C_n)$.

\noindent\textbf{(4) Leaf Node ($\Box$):}
If $A_n$ contains only a single attribute $a$, a leaf node is created to model the marginal distribution of that attribute under the context $C_n$: $\Pr_{T_n}(a \mid C_n)$.

\subsection{Conflict Prediction Algorithm}\label{sec1:ACP}

We now describe the \aspn-based conflict prediction algorithm, which operates in two phases: (1) offline modeling, and (2) online inference.





\begin{algorithm}[t]
  \SetKwInOut{Input}{input}
  \SetKwInOut{Output}{output}

  \caption{\textsc{BuildASPN}$(A_n, C_n, T_n)$}
  \label{alg:build-aspn}

  \Input{Target attributes $A_n$, conditional context $C_n$, training records $T_n$}
  \Output{\aspn model for $T_n$}

  \BlankLine

  \If{$|A_n| = 1$}{
      \Return leaf node $(\Box)$
  }

  $D \gets \texttt{FindStrongSubset}(A_n, T_n)$

  \If{$D = \emptyset$}{
      \ForEach{$a_i \in A_n$}{
          $N_i \gets \textsc{BuildASPN}(\{a_i\}, C_n, T_n)$
      }
      \Return Independent node $(\Box \times \Box)$ with children $\{N_1, \ldots, N_{|A_n|}\}$
  }
  \ElseIf{$D = A_n$}{
      \Return Joint node $(\Box\Box)$
  }
  \Else{
      $\{A_i, C_{i1}, \ldots, C_{im}\} \gets \texttt{Partition}(A_n, D)$ \\
      \For{$j = 1$ \KwTo $m$}{
          $(A_j, C_j) \gets (A_n \setminus A_i,\; C_n \cup C_{ij})$ \\
          $T_j \gets \{ t \in T_n \mid t[A_i] \in C_{ij} \}$ \\
          $N_j \gets \textsc{BuildASPN}(A_j, C_j, T_j)$
      }
      \Return Decomposition node $(\oplus)$ with children $\{N_1, \ldots, N_m\}$
  }
\end{algorithm}

\begin{figure*}[!t]
    \centering
    \includegraphics[width=\linewidth]{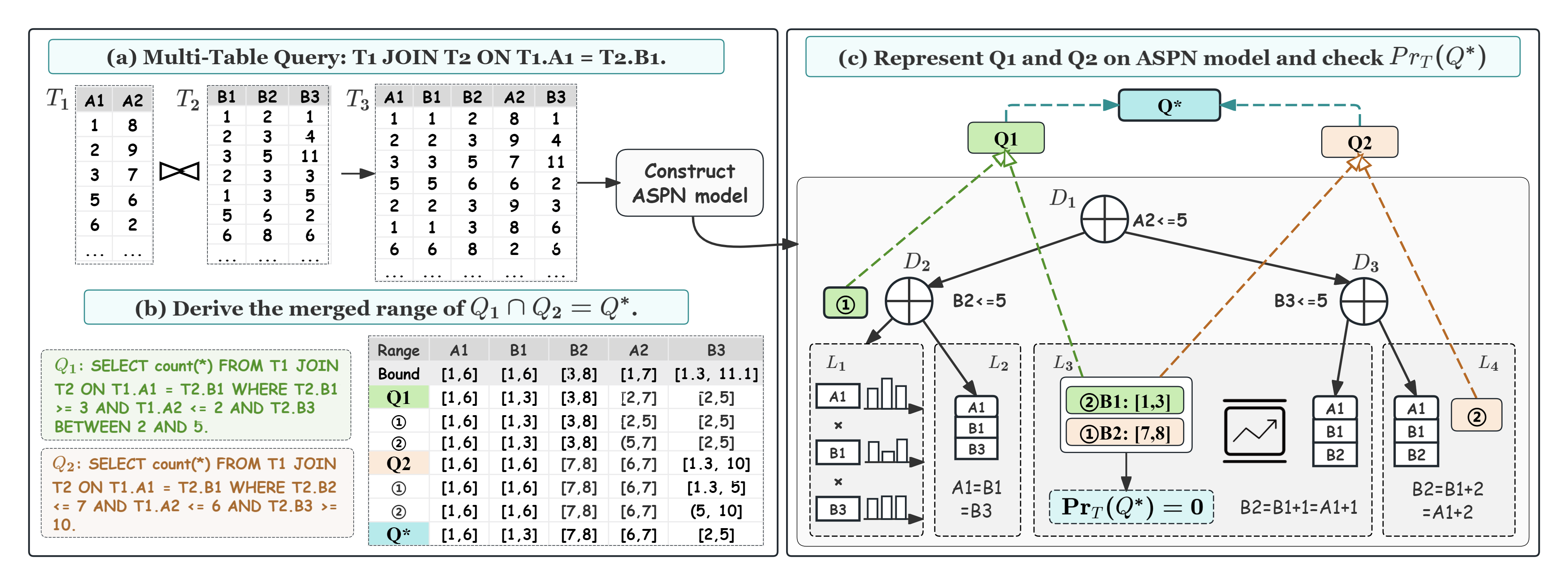}
    \caption{Example of \aspn-Based Conflict Prediction}
    \label{fig:ACP}
\end{figure*}

\noindent\textbf{Offline Modeling.} \aspn adopts a top-down recursive modeling strategy (Algorithm~\ref{alg:build-aspn}). At each node, the algorithm analyzes attribute dependencies and decides how to expand the model according to the following cases:

\textbf{Line 1–2 (Leaf Node):}
If the target set $A_n$ contains only one attribute, the recursion terminates and a leaf node $(\Box)$ is created. At this level, \aspn fits a simple density estimator (e.g., Gaussian mixtures, histograms, or other univariate models) to capture the marginal distribution $\Pr_{T_n}(a \mid C_n)$.

\textbf{Line 3 (Attribute Dependency Analysis):}
The algorithm invokes \texttt{FindStrongSubset} to identify a strongly dependent subset $D \subseteq A_n$ based on pairwise correlation scores (e.g., RDC~\cite{lopez2013randomized}). 

\textbf{Line 4–7 (Independent Node):}
If no strong dependencies are found (i.e., $D = \emptyset$), each attribute in $A_n$ is modeled independently. The algorithm recursively builds a univariate model for each $a_i \in A_n$ under the same context $C_n$ and returns an independent node $(\Box\times\Box)$ whose children are the independent attribute models.

\textbf{Line 8–9 (Joint Node):}
If all attributes are strongly correlated ($D=A_n$), a joint node $(\Box\Box)$ is created, which directly fits a joint probability estimator $\Pr_{T_n}(A_n \mid C_n)$ over the entire attribute set.

\textbf{Line 10–16 (Decomposition Node):}
If $D$ is a non-trivial subset of $A_n$, the algorithm selects a subset of attributes $A_i \subseteq A_n$ as the conditional context, and partitions $A_i$ into subsets ${C_{i1}, \dots, C_{im}}$ based on their values. For each subset $C_{ij}$, it recursively models the remaining attributes $A_j = A_n \setminus A_i$ under the updated context $C_j = C_n \cup C_{ij}$, with $T_j$ consisting of the records in $T_n$ that satisfy $t[A_i] \in C_{ij}$. Then $N_j \gets \textsc{BuildASPN}(A_j, C_j, T_j)$ is called to build the submodel. Finally, a decomposition node $(\oplus)$ is returned, aggregating all submodels as its children.

\textbf{Multi-table Modeling.}
When processing multi-table queries, \aspn incorporates join relationships to guide model construction. The process begins by building a join tree $J$, where each node corresponds to a table and each edge represents a join condition. For every join $(A, B)$ in $J$, the algorithm samples tuples from $A \Join B$ and computes correlations between attributes across the two tables. Table pairs with strong cross-table correlations are merged into a single modeling unit. This merging continues iteratively until no strongly correlated pairs remain, yielding several loosely connected subgraphs in which attributes are tightly coupled within each subgraph but weakly related across subgraphs. Each subgraph is then modeled independently using the \textsc{BuildASPN} procedure, producing a local \aspn model $F_T$ for that subgraph.

\textbf{Modeling Example.}
Figure~\ref{fig:ACP}(a) shows that $T_1$ and $T_2$ are merged into subgraph node $T_3$, covering attributes $A1$, $B1$, $B2$, $A2$, and $B3$, while Figure~\ref{fig:ACP}(c) illustrates the resulting \aspn structure. The modeling process first identifies $A2$ as nearly independent from the other attributes, splitting $T_3$ into $D_2$ and $D_3$ based on $A2 \leq 5$. Each subset is then examined for strongly dependent attribute groups. For $D_2$, $B2 \leq 5$ further divides the data into $L_1$ and $L_2$. $L_1$ contains independent attributes and is modeled using a multivariate histogram, $\Pr_{L_1}(W1) = \Pr_{L_1}(A1) \cdot \Pr_{L_1}(B1) \cdot \Pr_{L_1}(B3)$, while in $L_2$, $A1 = B1 = B3$ forms a joint variable $W2 = \{A1, B1, B3\}$ modeled as $\Pr_{L_2}(W2)$. For $D_3$, $B3$ is independent from the remaining attributes, producing $L_3$ and $L_4$: $L_3$ satisfies $B2 = B1 + 1 = A1 + 1$ and is modeled jointly as $\Pr_{L_3}({A1, B1, B2})$, whereas $L_4$ satisfies $B2 = B1 + 2 = A1 + 2$ and is modeled in the same way.

\textbf{Complexity Analysis.}
Let $n$ be the total number of nodes in the resulting \aspn, $d$ the number of decomposition nodes, $|A|$ the number of attributes, and $|T|$ the number of training records. The cost stems from three sources.
(1)~Each internal node invokes \texttt{FindStrongSubset} to evaluate attribute dependencies. This typically involves computing pairwise correlation scores (e.g., RDC) on $r$ sampled records, costing $O(|A|^2 r \log r)$ per node. 
(2)~For decomposition nodes, the algorithm partitions attributes into at most $b$ subsets, leading to $O(db)$ recursive calls, resulting in an additional cost of $O(d b)$.
(3)~The recursive structure ensures that each training record is only propagated down a single branch at each level. Therefore, the total data scan cost across all nodes is $O(|T|)$.
Overall, the training complexity is $O(n|A|^2 r \log r + db + |T|)$.

\noindent\textbf{Online Inference.} The goal of online inference is to estimate the probability that two queries $Q_1$ and $Q_2$ overlap in the data space. Let $Q^* = Q_1 \cap Q_2$ denote their intersection. This task reduces to computing $\Pr_T(Q^*)$. The process consists of two steps: (1) constructing the intersection region $Q^*$, and (2) recursively evaluating its probability using the \aspn model.

\textit{1. Intersection Construction.}
For each attribute $A_i$, the overlapping interval is computed as:
\begin{equation}
Q^*[A_i] =
\left[
  \max\left(L_i^{(1)}, L_i^{(2)}\right),\;
  \min\left(U_i^{(1)}, U_i^{(2)}\right)
\right].
\end{equation}

If $\max\!\left(L_i^{(1)}, L_i^{(2)}\right) > \min\!\left(U_i^{(1)}, U_i^{(2)}\right)$ for any attribute $A_i$, return $\Pr_T(Q^*) = 0$.

\textit{2. Recursive Evaluation.}
To compute $\Pr_T(Q^*)$, the query region $Q^*$ is recursively propagated through the \aspn structure. At each node, evaluation is performed according to its type. For decomposition nodes, $Q^*$ is partitioned based on conditional attributes and forwarded to each subcontext (e.g., $L_1, \dots, L_t$). At independent nodes, if $Q^*$ lies within the attribute domains and the attributes are mutually independent, the probability is computed as the product of marginal probabilities; otherwise, zero is returned. For joint nodes, if $Q^*$ intersects the joint domain, the joint probability is estimated from the fitted model; otherwise, the result is zero.

\textbf{Inference Example.}
In Figure~\ref{fig:ACP}(b), consider:

\emph{$Q_1$: SELECT count(*) FROM T1 JOIN T2 ON T1.A1 = T2.B1 WHERE T2.B1 <= 3 AND T1.A2 >= 2 AND T2.B3 BETWEEN 2 AND 5.}

\emph{$Q_2$: SELECT count(*) FROM T1 JOIN T2 ON T1.A1 = T2.B1 WHERE T2.B2 >= 7 AND T1.A2 >= 6 AND T2.B3 <= 10.}

The intersection of $Q_1$ and $Q_2$ is $Q^*$, which can be expressed as: \emph{A1:[1,6], B1: [1,3], B2: [7,8], A2: [6,7], B3: [2,5]}. $Q^*$ is projected onto the \aspn model, and each node is recursively evaluated. In $L_3$, due to the strong correlation (B2 = B1 + 1 = A1 + 1), the ranges \emph{B1: [1,3]} and \emph{B2: [7,8]} do not intersect; therefore, $\Pr_T(Q^*) = 0$.

\textbf{Complexity Analysis.}
Constructing the intersection $Q^*$ costs $O(|A|)$ since each attribute must be scanned once.
For recursive evaluation, independent nodes require only constant-time univariate estimation, and joint nodes similarly query lightweight joint estimators with constant cost.
Decomposition nodes contribute traversal overhead, one per node visited.
Let $i$, $j$, and $d$ denote the numbers of independent, joint, and decomposition nodes, respectively.
The total complexity is therefore $T(Q_1,Q_2) = O(|A| + i + j+d)$.

\subsection{Incremental Updates}\label{sec1:update}

When the table $T$ changes, the incremental update algorithm \texttt{Update} maintains the \aspn model by reusing the existing structure and adjusting only affected parameters.

In the single-table scenario, \texttt{Update} adopts a structure-preserving strategy, reusing most of the existing \aspn and applying local rebuilding only when necessary. Let $\Delta T$ denote the set of inserted or deleted records. \texttt{Update} traverses the \aspn top-down, performing localized parameter updates at each node: for decomposition nodes, $\Delta T$ is propagated to each corresponding sub-tree; for joint nodes, the system checks whether dependencies remain valid after applying $\Delta T$, and it updates the model parameters if they are preserved, otherwise it converts the node into a decomposition node and rebuilds the local structure via offline modeling; for independent nodes, the system verifies whether independence assumptions still hold, recursively updating marginal models when valid, and reconstructing the affected substructure when violated.


This update strategy naturally extends to multi-table scenarios, where join operations introduce more complex dependency patterns. 
When a base table $T_i$ in a multi-table join $T = T_1 \Join T_2 \Join \dots \Join T_n$ is modified, the resulting changes to the join output $T$ impact three aspects: 
(1) new tuples generated by joining the updated records with other tables; 
(2) previously valid tuples that become invalid due to the join condition after the update; and 
(3) existing joined tuples whose scattering probabilities must be re-estimated.

For major schema modifications (e.g., attribute addition/deletion, new joins, or changes in join conditions), the system falls back to offline reconstruction to retrain the global \aspn, ensuring accuracy and robustness.

\section{Multi-Version-Based Optimization}
\label{sec:optimization}

Deterministic systems often suffer from frequent aborts under high contention due to aggressive validation rules that reject transactions with potential conflicts, leading to reduced throughput and wasted computation. To address this, \system introduces a multi-version-based optimization that relaxes commit constraints while preserving correctness, enabling more transactions to execute successfully. This section presents the design of our multi-version-based execution process (Section~\ref{sec:MultiVersion}), the fallback strategy for transactions that fail initial validation (Section~\ref{sec2:fallback}), and a formal proof of determinism and serializability (Section~\ref{sec:determinism-serializability}).

\subsection{Multi-Version-Based Execution Process}
\label{sec:MultiVersion}

\begin{figure*}[!t]
  \centering
  \includegraphics[width=\linewidth]{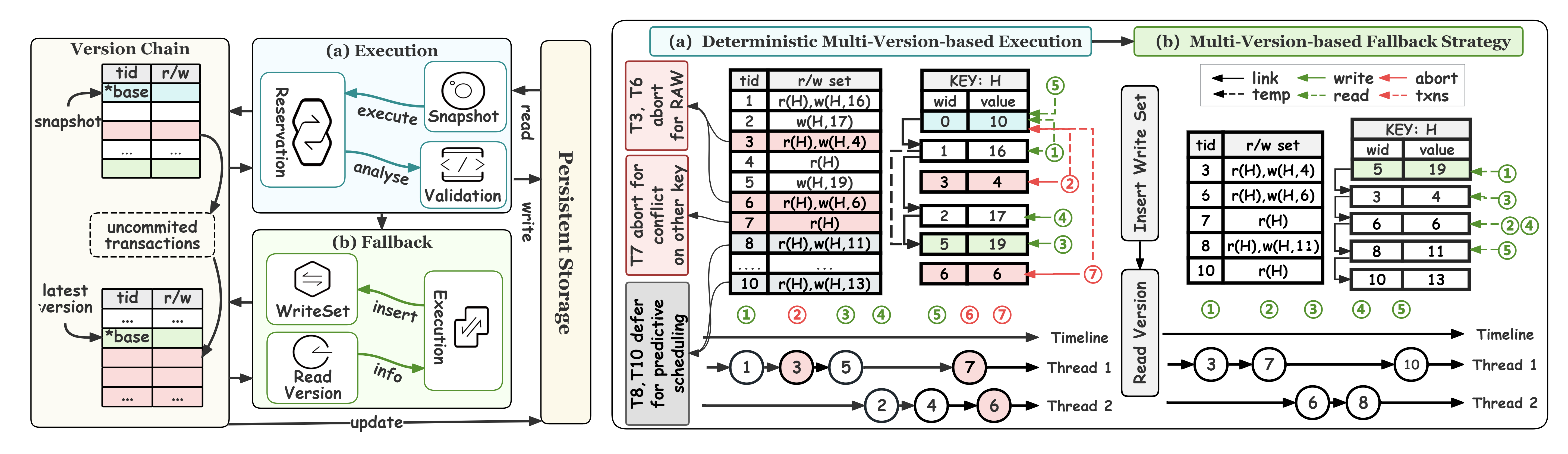}
  \caption{Multi-Version-Based Optimization in \system.}
  \label{fig:mvcc-scheduling}
\end{figure*}

\noindent\textbf{Background.} Deterministic systems like Aria~\cite{aria} typically adopt single-version storage and execute transactions on a consistent snapshot~\cite{chen2016fast}, followed by a validation phase. Without constructing a complete dependency graph, Aria validates a transaction for commit only if it has no WAW dependencies and no RAW dependencies. While this lightweight approach achieves high efficiency in low-contention environments, it suffers from high abort rates under high contention due to its limited ability to tolerate conflicts. In contrast, multi-version systems such as BOHM~\cite{BOHM} allow concurrent reads from historical versions and support more flexible dependency resolution, significantly reducing aborts under contention. However, these designs often incur higher complexity in scheduling, increasing runtime overhead.

\noindent\textbf{Key Idea.} The core challenge lies in striking a balance between the conflict tolerance of multi-version designs and the efficiency of lightweight deterministic execution. Existing approaches either suffer from excessive aborts or incur heavy coordination costs. To address this, we design a multi-version-based execution process that combines the simplicity of single-version scheduling with the robustness of multi-version concurrency control. This hybrid approach enables \system to sustain robust performance across both low- and high-contention workloads.

In \system, all transactions are executed on a consistent snapshot, avoiding synchronization during execution. Instead of modifying records in place, transactions buffer their updates as new versions, which remain invisible until commit. This deferred write strategy eliminates mid-execution conflicts and simplifies validation. The multi-version storage organizes each record as a version chain, with the head pointing to the latest visible version. Each version node stores the value, a pointer to its predecessor, and metadata such as a commit timestamp or transaction ID (TID). At the start of a batch, the head points to a base version representing the snapshot. During the commit phase, transactions append new version nodes for their writes, set timestamps to their TIDs, and atomically update the chain head. This organization ensures that write operations produce new versions rather than modifying records in place. Reads always access the snapshot-visible versions, enabling conflict-free execution in the execution phase.

Leveraging multi-versioning, \system adopts a relaxed validation strategy to enforce serializability. Unlike Aria, which rejects any transaction with WAW or RAW dependencies when reordering is disabled, \system allows WAW dependencies because multi-versioning ensures that writes never overwrite in-progress updates. This significantly increases commit opportunities under write-heavy workloads. \system only rejects transactions with RAW dependencies on prior transactions, as these can lead to inconsistencies if the transaction reads a value later overwritten by another transaction. This is formalized by the following rule:

\begin{ruledef}\label{rule1}
  \emph{A transaction is eligible to commit if it has no RAW dependencies on prior transactions.}
\end{ruledef}

\subsection{Fallback Strategy}
\label{sec2:fallback}

\noindent\textbf{Limitations of Lock-Based Fallback.} In ``execute-then-validate'' systems like Aria and DMUCCA~\cite{aria,DMUCCA}, high contention causes frequent aborts during validation. To mitigate this, the system defers aborted transactions to the next batch or switches to a lock-based fallback strategy~\cite{aria,calvin}, where transactions acquire read/write locks in TID order and are re-executed upon acquiring the necessary locks. Although the lock-based fallback strategy ensures serializability, the use of locks introduces spin and heavy overheads, especially under hot key contention. 

To address these limitations, \system adopts a multi-version-based fallback strategy. Fallback transactions run on the version chains generated at the end of the execution phase, which already include updates from all executed transactions. Instead of modifying records in place, each transaction appends its writes as new versions in the version chains, tagged with its TID and kept invisible until commit. When reading, a transaction retrieves the latest version whose timestamp precedes its own TID. Transactions in the fallback phase can execute in parallel, while their commits are finalized in deterministic timestamp order so that each incorporates all prior commits, preventing inconsistent reads. Following BOHM’s ``reads don’t block writes'' principle~\cite{BOHM, Silo}, this design eliminates global lock acquisition, avoids blocking among threads, and removes coordination overhead while preserving serializability.

Figure~\ref{fig:mvcc-scheduling} overviews the multi-version-based optimization in \system. A batch of transactions executes with snapshot reads guided by conflict-aware planning, while unsafe ones enter a fallback phase that pre-inserts writes, reads from version chains, and validates before commit. Figures~\ref{fig:mvcc-scheduling}(a)–(b) give an example: transactions $T_3$ and $T_6$ abort due to RAW/WAW conflicts on $T_1$ and $T_2$, while $T_7$  aborts on other keys, and $T_8$ and $T_{10}$ are early deferred by \aspn-based predictive scheduling. With the reordering strategy introduced in Section~\ref{sec:reorder}, the RAW dependency of $T_4$ is reordered into a WAR dependency, so $T_4$ can still commit. In the fallback phase, write transactions extend version chains sequentially, while reads traverse chains for visible versions. Multi-versioning ensures that even transactions with conflicts can proceed without blocking, with validation deferred to commit.

\noindent\textbf{Garbage Collection.} \system uses epoch-based garbage collection to manage versioned storage efficiently. It tracks the minimum active snapshot timestamp to determine the visibility boundary and reclaims versions no longer needed~\cite{nguyen2023survey,bottcher2019scalable, BOHM}. Once fallback transactions are accomplished, old versions and those from aborted transactions are discarded~\cite{taft2020cockroachdb}. 

\subsection{Determinism and Serializability}
\label{sec:determinism-serializability}

We now prove \hyperref[theorem1]{Theorem 1} by establishing two properties, determinism and serializability.

\begin{theorem}\label{theorem1}
  \emph{Under \hyperref[rule1]{Rule 1}, \system ensures deterministic and serializable execution.}
\end{theorem}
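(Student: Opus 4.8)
The plan is to prove the two properties separately, taking the commit order $\prec$ fixed by the commit-phase validation as the reference serial order throughout. For determinism, I would argue that every decision \system makes is a deterministic function of the input batch and the initial persistent state, then trace the five phases to confirm reproducibility across replicas: the snapshot is drawn from the committed storage and is identical; \aspn inference is deterministic by construction (Section~\ref{sec1:aspn-model}), mapping identical inputs to identical outputs; the \mtfs reordering and the Rule~1 test depend only on the global order and the (actual) read/write sets; and commits---both in the execution phase and, in TID order, in the fallback phase---together with each fallback read of the unique latest version whose timestamp precedes its own TID, are fully determined. Since the phase sequence is fixed and every step is deterministic, any two replicas given the same input reach the same final state.

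For serializability, the strategy is to show the committed schedule is conflict-equivalent to the serial schedule induced by $\prec$ by proving its serialization graph acyclic. I would first note that in the execution phase every transaction reads the common snapshot and buffers writes as invisible versions, so the \emph{only} dependency that can make batch execution diverge from serial-$\prec$ execution is RAW: a committed transaction reading a key written by a $\prec$-earlier committed transaction would have to observe that write serially but does not on the snapshot. Rule~1 forbids precisely this, eliminating every backward RAW edge. The surviving dependencies respect $\prec$: for WAR the $\prec$-earlier reader sees the pre-write snapshot value in both executions, and for WAW the version chain is ordered by TID so the $\prec$-latest writer supplies the head version, matching the serial outcome. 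Hence all conflict edges among committed transactions point forward in $\prec$, the graph is acyclic, and the committed schedule is serializable in order $\prec$ (with reordering, Section~\ref{sec:reorder}, only enlarging the committable set while keeping $\prec$ acyclic).

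I would then fold the fallback phase into the same order $\prec$. Because Rule~1 is evaluated against the actual read/write sets of \emph{all} prior transactions, any execution-committed transaction has no RAW even on deferred lower-order ones, so its snapshot reads coincide with its serial-$\prec$ reads; each fallback transaction, reading versions below its own TID and committing in TID order, is then argued to behave as a serial execution at its position in $\prec$. I expect the main obstacle to be this last equivalence: fallback transactions run in parallel and pre-insert TID-tagged, commit-invisible versions, so I must show that deferred visibility plus TID-ordered commit admits no inconsistent read---that each fallback transaction observes exactly the committed effects of all $\prec$-earlier transactions and none of the $\prec$-later ones. Establishing this, and checking it composes with the execution-phase argument so the global serialization graph stays acyclic, is the crux.
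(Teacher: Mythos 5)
Your proposal is correct in substance but argues serializability by a genuinely different route than the paper. The paper's proof is a short two-case contradiction: it assumes a non-serializable committed sequence, observes that a backward RAW read is impossible because all batch transactions read the same consistent snapshot (and Rule~1 rejects any committed transaction with a RAW dependency on a prior one), and that WAW cannot lose writes because multi-versioning appends rather than overwrites; it never constructs a serialization graph, never discusses WAR edges, and does not fold the fallback phase into the serializability argument at all. You instead fix the commit order $\prec$ as the reference serial order and prove conflict-equivalence by showing every surviving conflict edge among committed transactions points forward in $\prec$ (Rule~1 kills backward RAW; WAR readers see the pre-write snapshot in both executions; WAW resolves by TID order on the version chain), which is the standard acyclicity argument and is strictly more fine-grained: it makes explicit the WAR case the paper silently omits, and it identifies the real obligation the paper glosses over, namely that Rule~1 must be checked against \emph{all} prior transactions (including deferred ones) so that an execution-committed transaction's snapshot reads agree with its serial-$\prec$ reads even when lower-TID transactions commit later in fallback. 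Your honestly flagged crux---that TID-tagged, commit-invisible fallback versions with TID-ordered commits give each fallback transaction exactly the effects of all $\prec$-earlier transactions---is indeed the step needing a careful argument (essentially the reads-from claim for version chains under deterministic commit order), and it is a gap in the paper's proof too, which restricts its contradiction analysis to the execution phase. Your determinism argument matches the paper's (deterministic \aspn filtering, fixed TIDs, identical snapshot, input-determined validation), again going slightly further by covering fallback reads and TID-ordered fallback commits explicitly. In short: the paper buys brevity with its two-case contradiction; your decomposition buys a complete accounting of conflict types and surfaces the fallback-equivalence lemma that a fully rigorous proof would have to discharge.
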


\begin{proof}
\noindent\textbf{Determinism:} Given a batch of transactions and a database snapshot, \system assigns fixed transaction IDs and then uses \aspn to filter them, executing the filtered transactions on the same snapshot. Each transaction’s commit decision is based solely on its read/write sets and the presence of RAW dependencies with lower-TID transactions. Since \aspn is deterministic (Section~\ref{sec:ACP}), the same batch of transactions always produces the same filtering result, keeping the input consistent across replicas. Combined with the identical snapshot and validation rule (\hyperref[rule1]{Rule 1}), this guarantees that all replicas maintain deterministic execution.

\noindent\textbf{Serializability:} The following proof is by contradiction. Assume there exists a committed transaction sequence $\cdots \rightarrow T_i \rightarrow \cdots \rightarrow T_j \rightarrow \cdots$ and the result produced by \system is not serializable.
With concurrent execution and commit validation, two cases may cause non-serializability:
(1)~Transaction $T_j$ read a version written by $T_i$ (RAW), and  
(2)~Transaction $T_j$'s updates were overwritten by $T_i$'s updates (WAW). 
For (1), all transactions in a batch execute on the same consistent snapshot, and reads only access committed versions from prior rounds. Thus, $T_j$ cannot read a version written by $T_i$ within the same batch. If it did, it would violate snapshot isolation, contradicting \system's execution semantics.
For (2), \system employs multi-versioning, so each transaction appends a new version rather than overwriting existing data.  
Therefore, $T_i$ cannot overwrite $T_j$’s updates, and WAW does not cause write loss or inconsistency.
By \hyperref[rule1]{Rule 1}, transactions with RAW dependencies on prior transactions are rejected, and WAW dependencies are safely handled via version chains. 
Both cases lead to a contradiction. Hence, the execution must be conflict-serializable.
\end{proof}

\section{Reordering algorithm}
\label{sec:reorder}

Efficient transaction reordering is crucial for deterministic scheduling, but existing methods either incur high overhead or miss commit opportunities due to limited dependency awareness. To address this issue, we propose Matrix Two-pass Forward Scan (\mtfs), a lightweight algorithm that performs compact dependency analysis to produce a reordering that maximizes commit opportunities while preserving serializability. This section introduces a motivating example (Section~\ref{sec3:motivation}), describes the \mtfs algorithm (Section~\ref{sec3:mtfs}), and analyzes its correctness and effectiveness (Section~\ref{sec3:correct-analysis}).

\subsection{A Motivating Example}
\label{sec3:motivation}

\begin{figure}[!t]
  \centering
  \includegraphics[width=\linewidth]{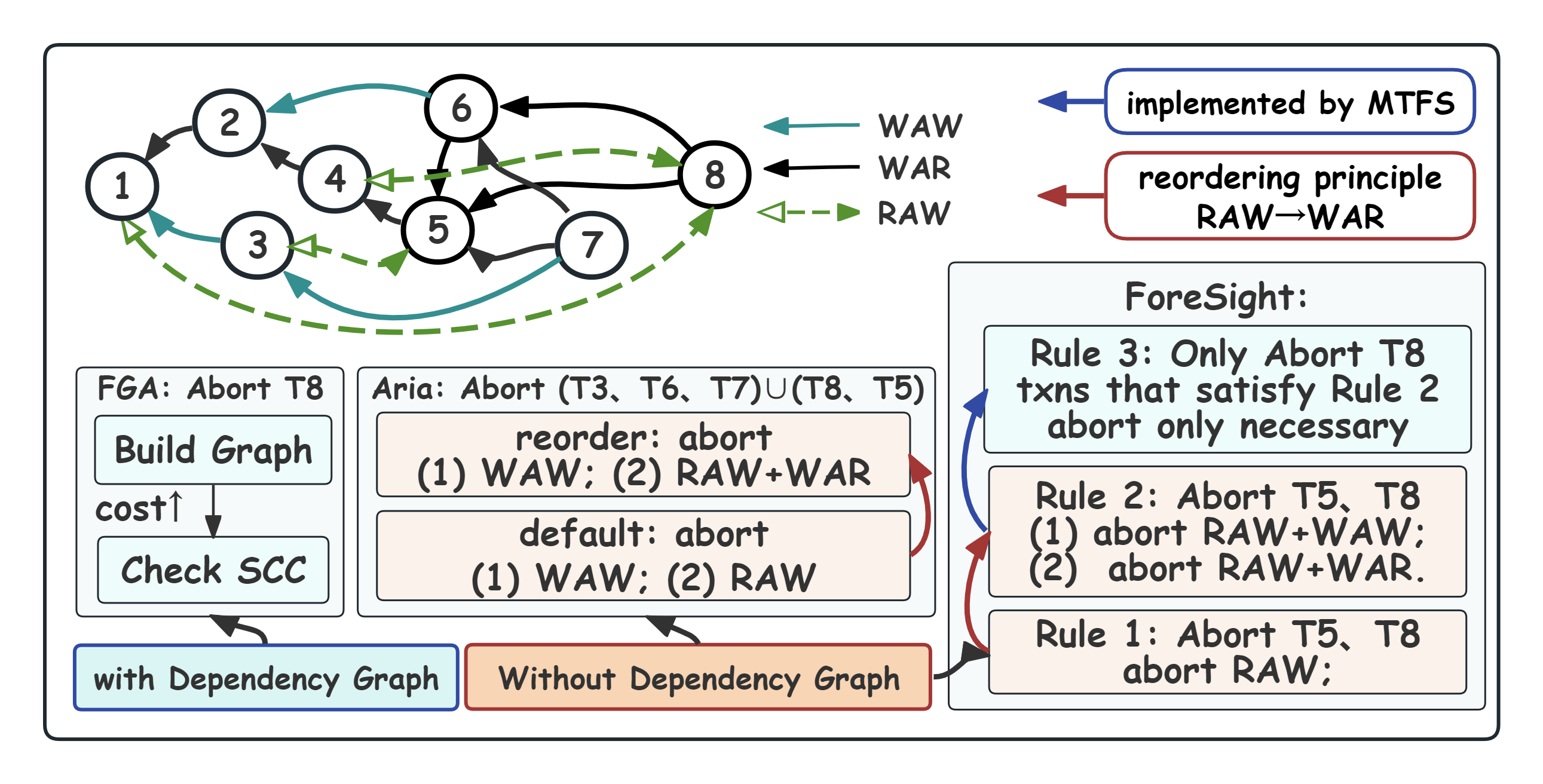}
  \caption{Illustration of two reordering strategies and the validation rules of \system}
  \label{fig:ProtocolCMP}
\end{figure}

The ordering of input transactions is the key to scheduling in deterministic databases, as it determines the commit order and affects the overall throughput~\cite{aria}. Figure~\ref{fig:ProtocolCMP} illustrates two representative reordering strategies and the validation rules of \system.

(1) \textbf{Without Dependency Graph:} A typical example Aria~\cite{aria}, which employs SSI-based cycle avoidance~\cite{chen2016fast} without building a full graph. When reordering is applied~\cite{Doppel}, transactions with RAW dependencies are reordered into WAR dependencies. Then, Aria only aborts transactions with WAW dependencies and those with both RAW and WAR dependencies, achieving higher performance.

(2) \textbf{With Dependency Graph:} 
The FVS-Greedy Algorithm (FGA)~\cite{fga} constructs the dependency graph after executing the batch, detects strongly connected components (SCCs)~\cite{nuutila1994finding}, and breaks cycles by aborting the transaction with the highest weight (based on in/out degrees) in each multi-node SCC. This approach is based on the notion of a feedback vertex set (FVS), which refers to the minimal set of vertices whose removal eliminates all cycles.

We presented a design where any transaction satisfying \hyperref[rule1]{Rule 1} can commit (Section~\ref{sec:optimization}). According to the reordering principle of Aria~\cite{aria}, the abort rate can be reduced by transforming RAW dependencies into WAR dependencies. Under snapshot reads, a RAW dependency leads to an abort because the read cannot see future writes. If reordered into a WAR dependency, the execution remains serializable, with the read finishing before the write. Applying such reordering enables \hyperref[rule1]{Rule 1} to be relaxed to \hyperref[rule2]{Rule 2}.

\begin{ruledef}\label{rule2}
  \emph{A transaction is not allowed to commit in either of the following cases:
  (1) It has RAW and WAW dependencies with preceding transactions;
  (2) It has RAW and WAR dependencies with preceding transactions.}
\end{ruledef}

To maximize commit opportunities, we further relax \hyperref[rule2]{Rule~2} to \hyperref[rule3]{Rule~3}, which selectively aborts the minimal set of transactions that would inevitably introduce cycles if committed. This refinement ensures serializability while minimizing the abort set, thereby allowing more transactions to commit.  

\begin{ruledef}\label{rule3}
\emph{Abort only the minimal set of transactions necessary to break inevitable cycles.}
\end{ruledef}

\begin{figure}[!t]
  \centering
  \includegraphics[width=\linewidth]{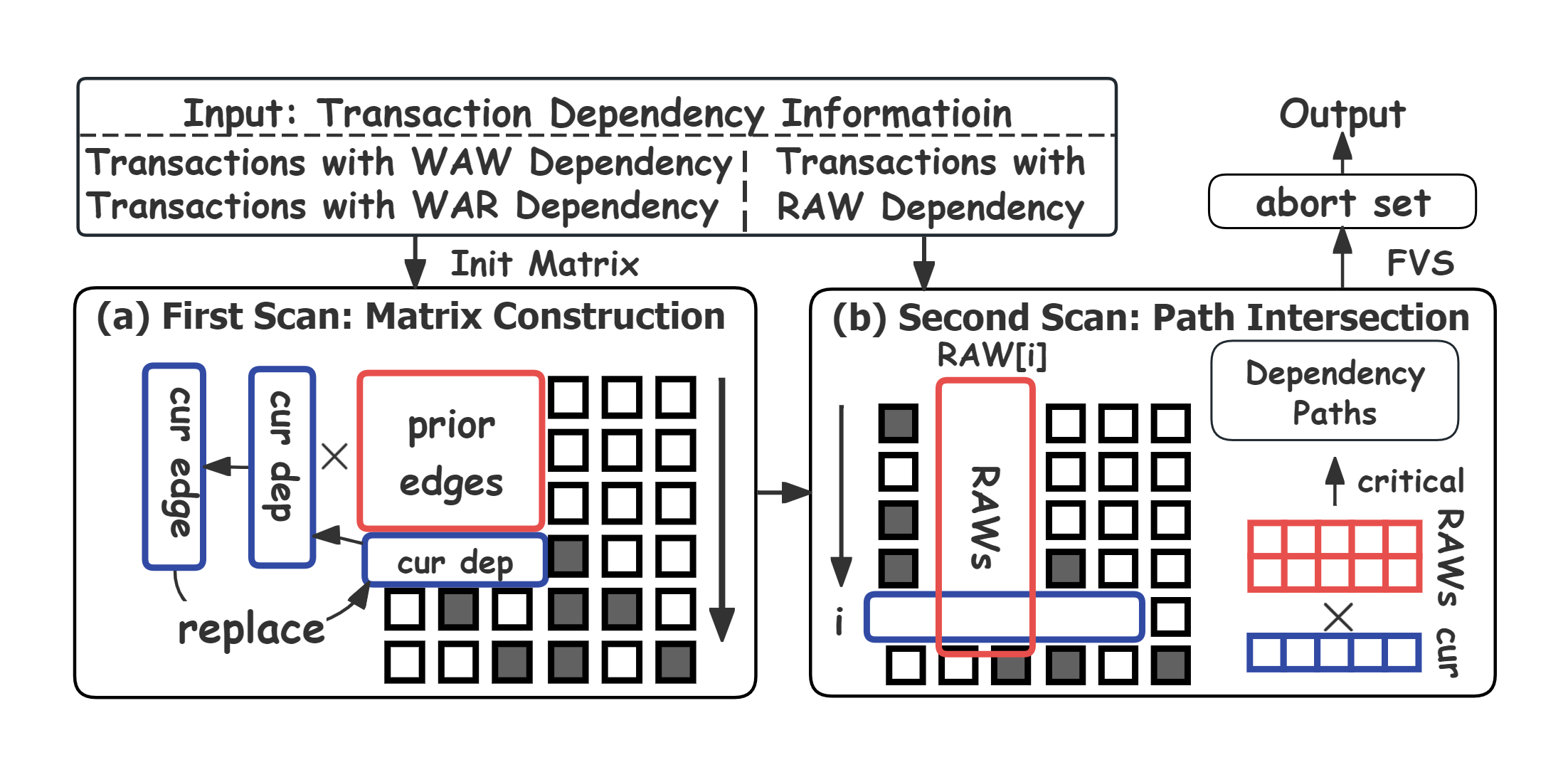}
  \caption{Illustration of Matrix Two-pass Forward Scan (\mtfs) algorithm.}
  \label{fig:mtfs}
\end{figure}

\subsection{Algorithm Design}
\label{sec3:mtfs}

To enable \hyperref[rule3]{Rule~3}, the system requires an efficient method to identify only those transactions that would inevitably form cycles, while maximizing commit opportunities.
Inspired by prior work such as SSFS (Stateful Single Forward Scan)~\cite{ssfs}, which constructs dependency graphs with a single forward scan by tracking transaction states. While SSFS achieves efficiency, it focuses on cycle detection rather than the selective abort minimization required by \hyperref[rule3]{Rule~3}. We develop the Matrix Two-pass Forward Scan (\mtfs) reordering algorithm, which extends the idea into two passes: one for building a global dependency matrix and another for producing a conflict-aware schedule. Figure~\ref{fig:mtfs} illustrates the process.

\textbf{First Round Scan: Dependency Matrix Construction.} 
The objective of the first scan is to construct a \emph{dependency matrix} $M \in \mathbb{Z}^{N\times N}$, where $N$ denotes the number of transactions in the batch. This matrix serves as a compact representation of the dependency graph, with each entry $M[i][j]$ recording the number of distinct dependency paths from transaction $T_i$ to $T_j$, capturing both direct and indirect dependencies. It enables efficient analysis of dependency density for subsequent optimization.

Given a \emph{dependency dictionary} $\mathcal{D}$, where $\mathcal{D}[i]$ lists all transactions that $T_i$ directly depends on via WAR or WAW, the algorithm initializes $M$ by setting $M[i][i] = 1$ for self-dependency and $M[i][j] = 1$ for each $T_j \in \mathcal{D}[i]$ to represent direct dependencies.

Then, the algorithm propagates indirect dependencies in a row-by-row fashion.  For each $T_i$, if $M[i][j] > 0$ for some $j < i$, then all paths from $T_j$ are inherited by $T_i$, implemented by adding row $M[j]$ to row $M[i]$. Owing to the initialization, this row update is equivalent to multiplying the \emph{current dependency row} by the \emph{prior edges matrix}, allowing the transitive closure to be computed via efficient matrix operations. As shown in Figure~\ref{fig:mtfs}, the blue box (current row) accumulates dependency information from the red box (edges of predecessors), ensuring full coverage of indirect paths.

At the end of the scan, $M[i][j]=0$ indicates no path exists from $T_i$ to $T_j$; $M[i][j]=1$ indicates exactly one path; and $M[i][j]>1$ indicates multiple distinct paths, implying stronger dependencies.

\textbf{Second Round Scan: Reorder and Abort Set Extraction.}
After constructing the dependency matrix $M$, the second forward scan identifies a minimal set of transactions whose removal makes the dependency graph acyclic. This corresponds to finding a feedback vertex set~\cite{chen2008improved}. Since computing the minimum FVS is NP-complete~\cite{NP-complete}, our method uses a single-pass selection procedure that avoids the multiple full graph traversals required by degree-based selection methods~\cite{fga}.

Given $M$ and the dependency dictionary $RAW$, the algorithm outputs an abort set $S \subseteq {T_1,\dots,T_N}$. The method proceeds by identifying all dependency paths that involve RAW edges and selecting the transactions that occur most frequently on these paths.

First, the algorithm initializes a path set $\mathcal{P}$ to store all dependency paths. For each transaction $T_i$ with RAW dependencies (i.e., $RAW[i] \neq \emptyset$) and for each $T_j \in RAW[i]$, it identifies all transactions that appear on every path from $T_i$ to $T_j$ by taking the element-wise product of row $M[i]$ and column $M[:, j]$. Here, $M[i]$ marks all transactions reachable from $T_i$, while $M[:, j]$ marks all transactions that can reach $T_j$; nonzero entries in the product vector therefore correspond to transactions lying on all such paths. The resulting set of \emph{critical intermediates}, together with $T_i$ and $T_j$, forms a dependency path, which is added to $\mathcal{P}$. Repeating this process for all RAW edges accumulates the complete set of RAW-related paths that could participate in cycles.

Finally, the algorithm counts the frequency of each transaction across all paths in $\mathcal{P}$ and selects those with the highest frequencies into the FVS, forming the abort set $S$. Moreover, because $M$ encodes all dependency information in a compact form, both path extraction and FVS selection can be executed independently on different partitions of the batch~\cite{fleischer2000identifying,hong2013fast}, enabling scalable parallel execution.

\textbf{Complexity Analysis.}
The \mtfs algorithm consists of two main scans. In the first round, the algorithm performs one row-wise vector–matrix multiplication per transaction to propagate indirect dependencies; if each multiplication costs $O(m)$, the total cost is $O(N \cdot m)$. In the second round, a single vector multiplication per transaction detects critical transactions on dependency paths, also costing $O(N \cdot m)$. Finally, scanning the $l$ dependency paths to count transaction frequencies and select high-frequency candidates into the feedback vertex set costs $O(N \cdot l)$ using a linear-time method such as counting sort. Overall, the time complexity is $O(N \cdot (l + m))$, which is more scalable for large transaction batches than traditional methods requiring multiple passes or SCC detection~\cite{fga}.

\subsection{Correctness and Effectiveness Analysis}
\label{sec3:correct-analysis}

We now prove the correctness of our reordering algorithm and analyze its effectiveness in reducing abort rates.

\begin{figure}
  \centering
  \includegraphics[width=\linewidth]{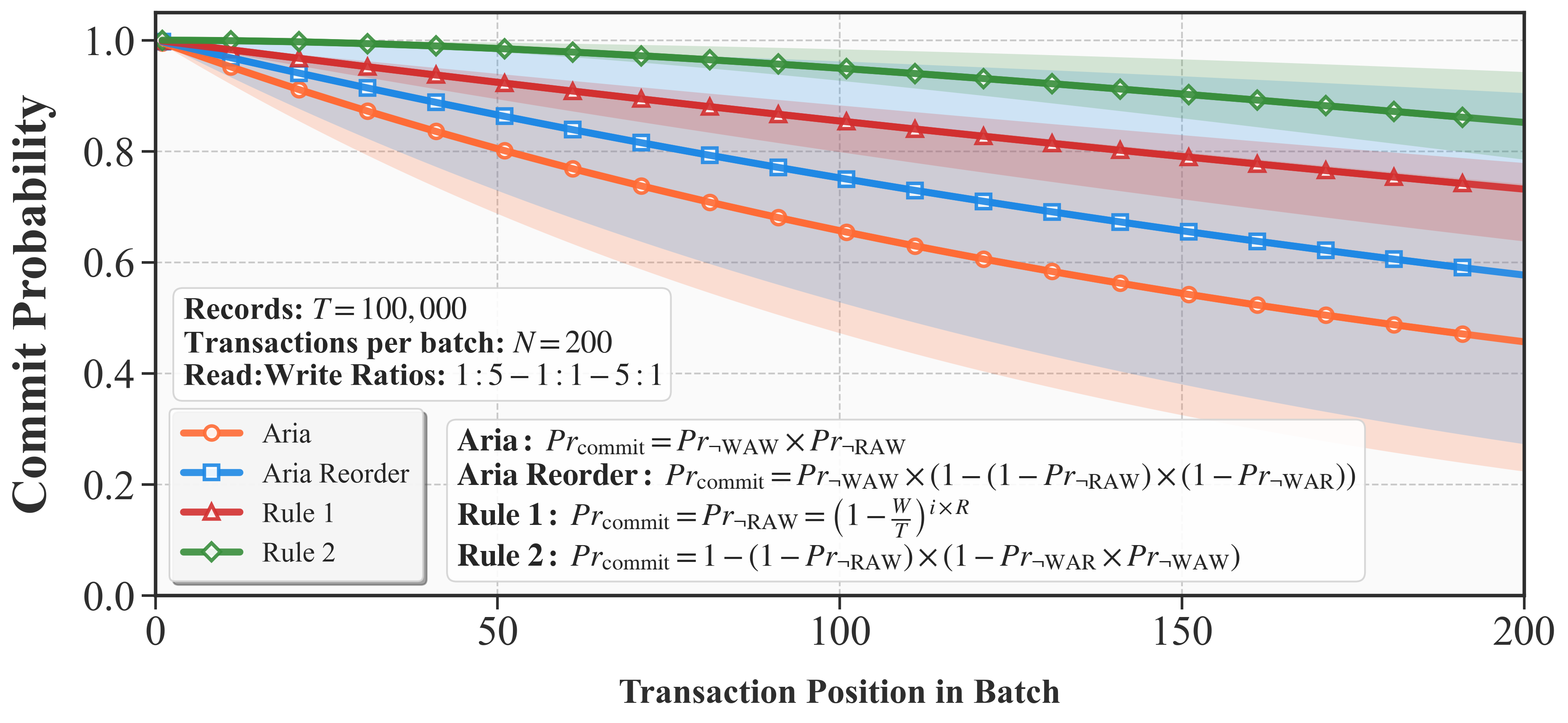}
  \caption{Commit probability under different read/write ratios (\system vs. Aria).}
  \label{fig:effective_reordering}
\end{figure}

\textbf{Correctness.}
The base reordering principle is captured in \hyperref[rule2]{Rule~2}, which transforms RAW edges into WAR edges whenever possible, placing the reader before the writer to avoid RAW violations~\cite{aria}. This transformation may add precedence edges; if a transaction already has WAW or WAR dependencies on earlier transactions, reordering can create cycles. To prevent this, \hyperref[rule3]{Rule~3} detects and aborts those transactions whose inclusion would inevitably form a cycle. The remaining transactions are either (i) WAR/WAW-only or (ii) reordered to remove RAW without forming cycles. After \hyperref[rule3]{Rule~3}, the dependency graph is acyclic, and a topological order yields a valid serializable schedule. Since dependency matrix $M$, path sets $\mathcal{P}$, and the abort set $S$ are computed deterministically per batch, all replicas derive the same order, ensuring determinism.

\textbf{Effectiveness Analysis.}
Consider a database with $T$ records and a batch of $N$ transactions, each with $R$ reads and $W$ writes over uniformly distributed keys. For transaction $T_i$, the probabilities of avoiding dependencies are:
$
\Pr[\neg\text{RAW}_i] = \left(1 - \frac{W}{T}\right)^{iR}, 
\Pr[\neg\text{WAR}_i] = \left(1 - \frac{R}{T}\right)^{iW}, 
\Pr[\neg\text{WAW}_i] = \left(1 - \frac{W}{T}\right)^{iW}
$.

Under \hyperref[rule1]{Rule~1}, \system aborts a transaction only if it has a RAW dependency, allowing all WAR/WAW-only transactions to commit. \hyperref[rule2]{Rule~2} further relaxes this by reordering, aborting only when RAW co-occurs with either WAW or WAR, yielding the commit probability:

\[
\Pr[T_i] = 1 - (1-\Pr[\neg\text{RAW}_i])\cdot(1-\Pr[\neg\text{WAR}_i]\cdot\Pr[\neg\text{WAW}_i])
\]

Furthermore, \hyperref[rule3]{Rule~3} maximizes commit opportunities by aborting only transactions that would inevitably form cycles. As shown in Figure~\ref{fig:effective_reordering}, the commit probability of \system is significantly higher than that of Aria, especially under contention. This demonstrates the effectiveness of the reordering algorithm in reducing abort rates and improving overall commit rates.

\section{Evaluation}
\label{sec:eval}

In this section, we evaluate the performance of \system with extensive experiments.

\subsection{Experimental Setup}

\begin{figure*}[!t]
  \centering
  \begin{subfigure}{0.32\linewidth}
      \centering
      \includegraphics[width=\linewidth]{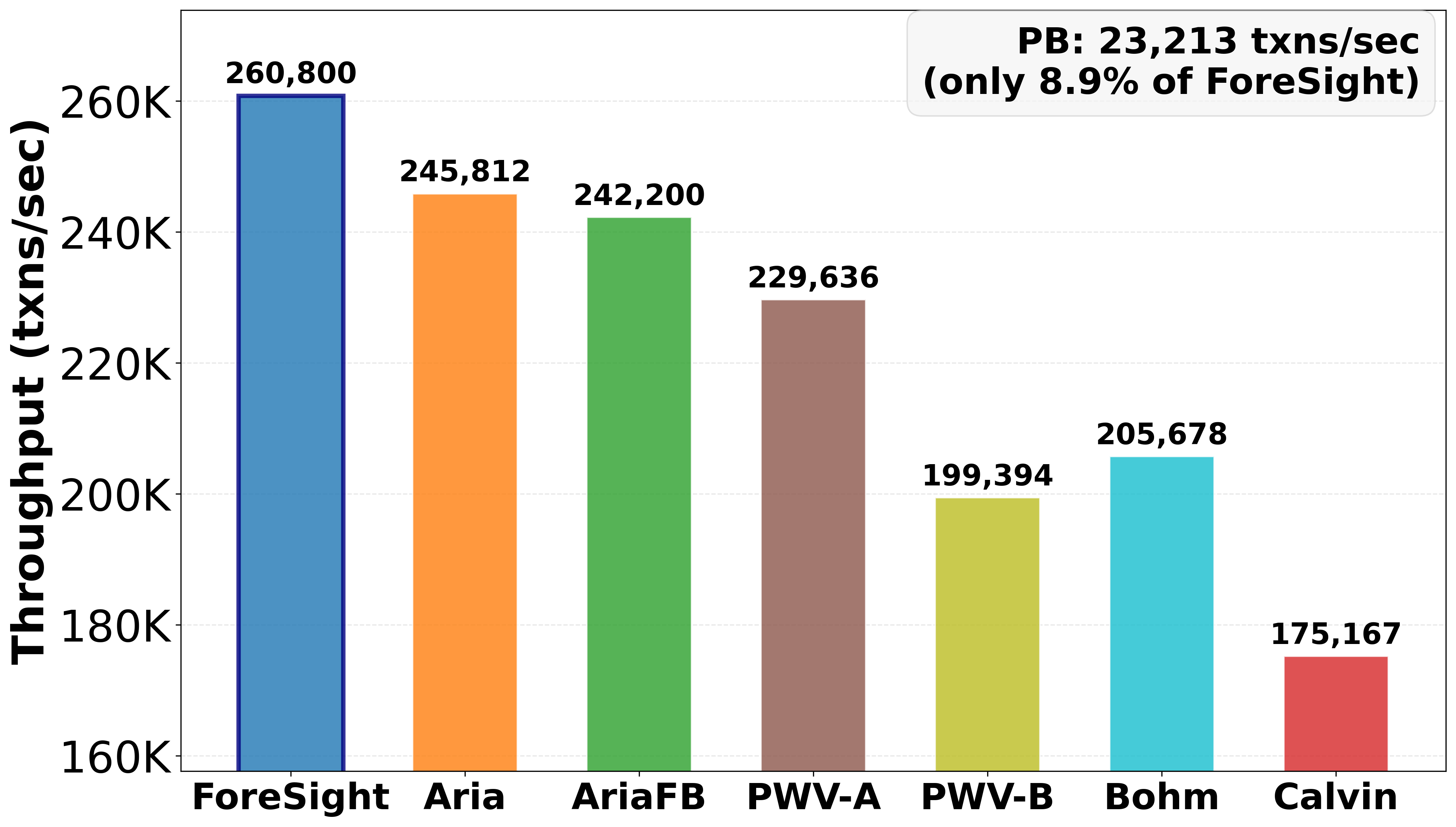}
      \caption{Throughput evaluation on YCSB benchmark}
      \label{fig:ycsb-comparison}
  \end{subfigure}
  \hfill
  \begin{subfigure}{0.32\linewidth}
      \centering
      \includegraphics[width=\linewidth]{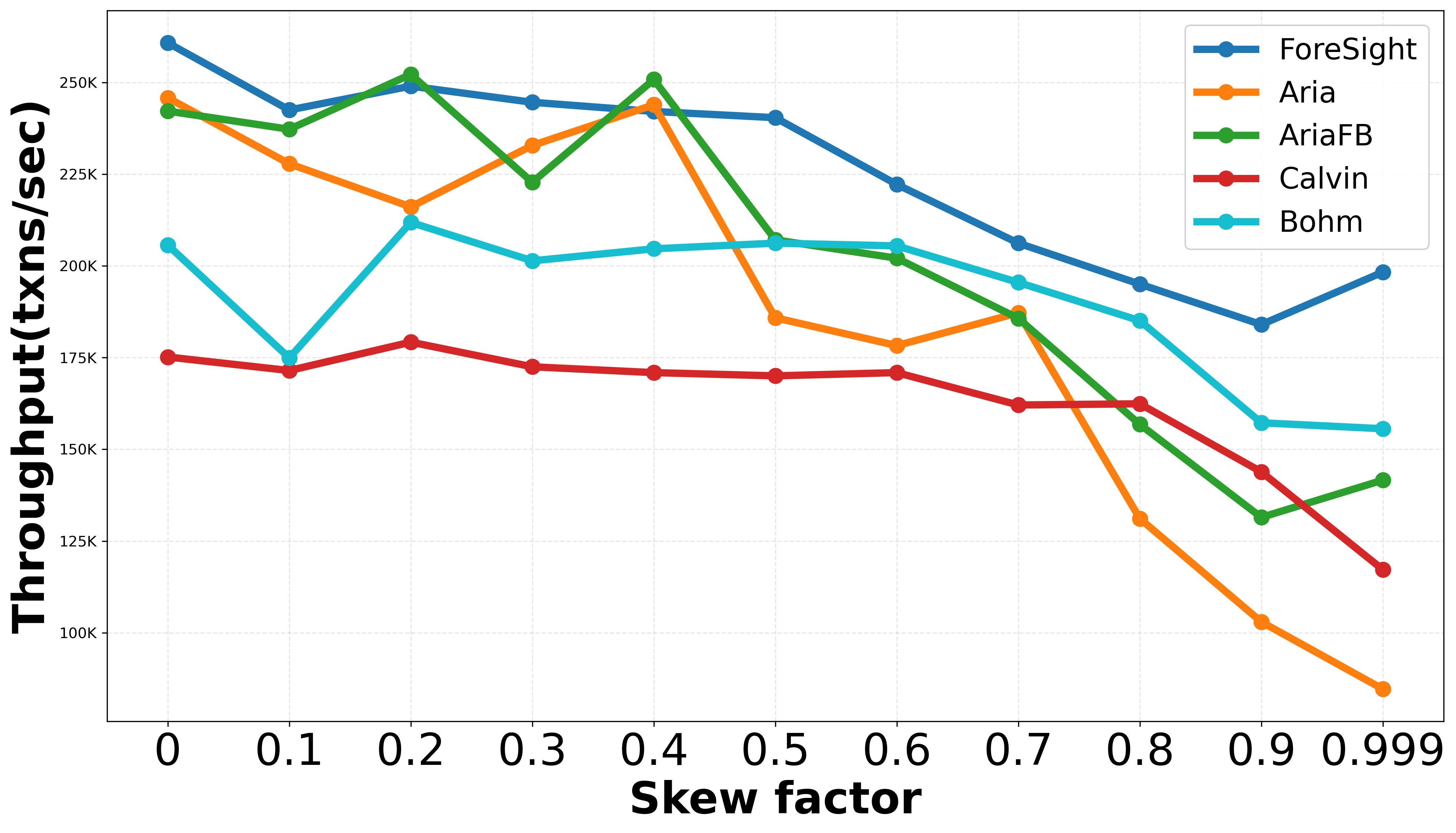}
      \caption{Throughput under YCSB with varying skew factors}
      \label{fig:skewnessPerformance}
  \end{subfigure}
  \hfill
  \begin{subfigure}{0.32\linewidth}
      \centering
      \includegraphics[width=\linewidth]{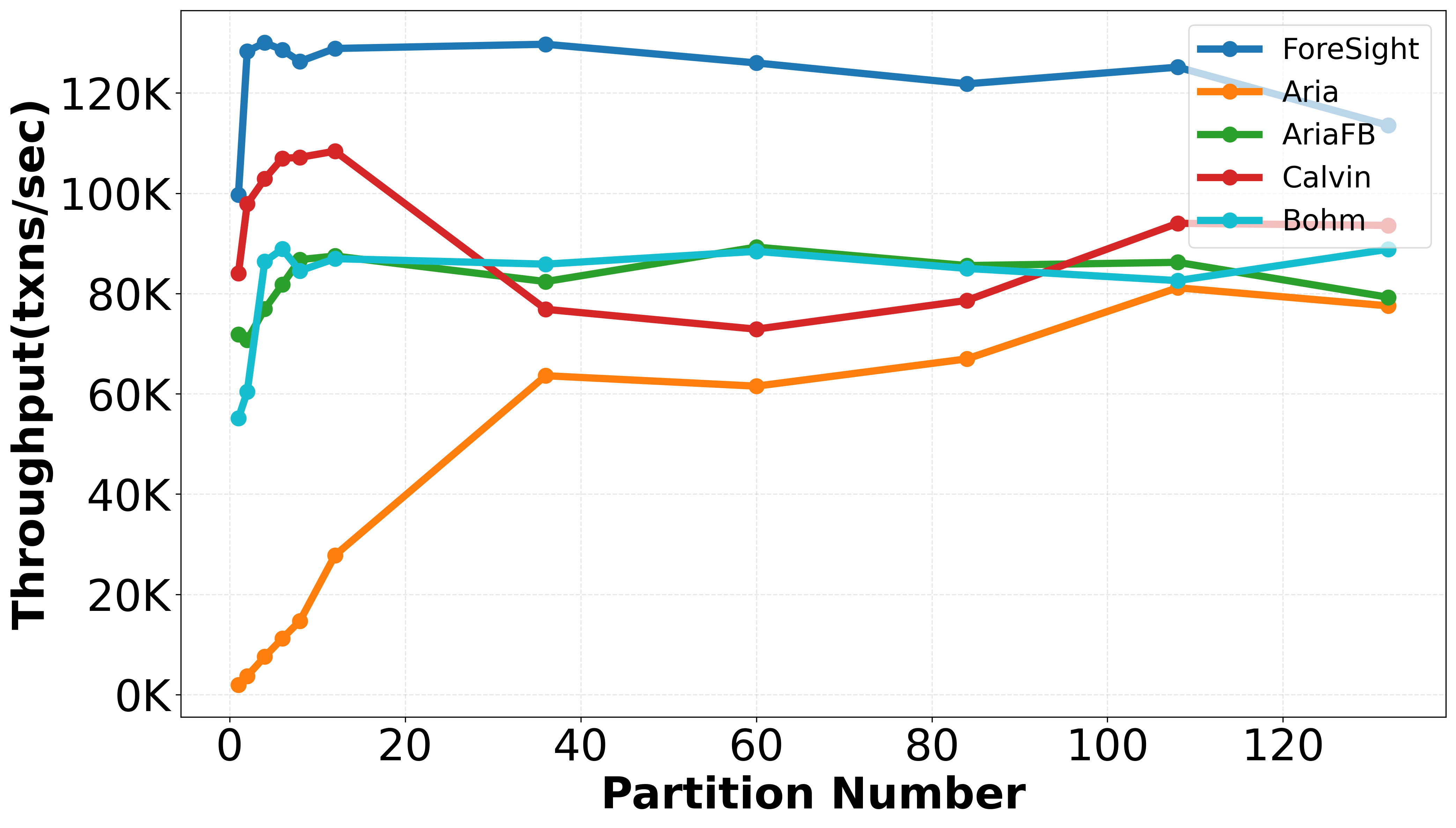}
      \caption{Throughput under TPC-C with varying numbers of partitions}
      \label{fig:partition-performance}
  \end{subfigure}
  
  \caption{Throughput and scalability under YCSB and TPC-C benchmarks}
  \label{fig:sample-num-robustness}
\end{figure*}

\textbf{Hardware settings.}  
All experiments are conducted in a virtualized environment with 8 vCPUs, 8 GB of RAM, and 60 GB of SSD per instance, running CentOS 7. The environment is hosted on a workstation with an Intel i7-12700H processor, 40 GB of RAM, and 2.5 TB of SSD storage.

\textbf{Benchmarks.}  
We evaluate our system using five benchmarks: TPCH~\cite{tpch} with a scale factor of 1, containing 22 tables and about 1 GB of data; IMDB~\cite{imdb}, a movie review dataset with 21 tables and 7 GB of data; TPC-C~\cite{tpcc}, an order-processing workload with ten warehouse partitions; YCSB~\cite{ycsb}, a single-table key-value schema with ten columns; and GAS~\cite{uci-gas}, a gas sensor dataset containing 3.84 million records with eight selected attributes. These benchmarks together cover scenarios ranging from analytical queries to high-concurrency transactional workloads.

\textbf{Workload Configurations.}  
We evaluate protocol performance and the effectiveness of reordering algorithms on TPC-C and YCSB, which support efficient pre-computation of read/write sets and are widely used in deterministic database studies. In TPC-C, transactions operate on ten warehouse partitions with ten operations per transaction, and the number of partitions equals the number of worker threads. YCSB includes two variants, $A$ with a balanced 50\% read/50\% update mix and $B$ with a read-heavy 95\% read/5\% update mix, both using a Zipfian distribution to simulate skew. Performance results are based on YCSB-A, except for PWV, where PWV-A/B denote runs on YCSB-A/B due to early write visibility effects. Each partition stores 40,000 keys, with batch sizes of 1,000 for YCSB and 500 for TPC-C. We also evaluate conflict prediction using 1,000 single-table queries on GAS and both 1,000 single-table and 1,000 multi-table queries on TPCH, IMDB, and TPC-C; all queries are randomly generated and contain at least one range predicate.


\textbf{Protocols.} We evaluate the following systems: (1)~\textbf{\system:} A predictive-scheduling deterministic database. (2)~\textbf{Aria:} A deterministic reordering database with fallback strategy disabled~\cite{aria}. (3)~\textbf{AriaFB:} Aria employs fallback strategy~\cite{aria}. (4)~\textbf{BOHM:} A deterministic database based on dependency graphs~\cite{BOHM}.  (5)~\textbf{PWV:} A deterministic database supporting early write visibility~\cite{PWV}. (6)~\textbf{Calvin:} A deterministic database using ordered locking, configured with one lock manager thread~\cite{calvin}. (7)~\textbf{PB:} A traditional non-deterministic primary-backup database using strict two-phase locking (S2PL) and a NO-WAIT deadlock prevention strategy~\cite{soisalon1995partial}.

\textbf{Evaluation Metrics.} We evaluate the protocols mainly with three aspects: 

(1)~\textbf{Throughput} measures the number of processed transactions, indicating efficiency under varying contention and skew.  

(2)~\textbf{Commit Rate} measures the proportion of transactions committed, serving as a key indicator of how well reordering algorithms reduce conflicts under contention.

(3)~\textbf{Prediction Accuracy} evaluates conflict prediction as a classifier using standard metrics (accuracy, precision, recall):
\textit{Accuracy} — proportion of correct predictions, $(TP+TN)/(TP+TN+FP+FN)$, where TP (true positives) are correctly predicted conflicts, TN (true negatives) are correctly predicted non-conflicts, FP (false positives) are non-conflicts predicted as conflicts, and FN (false negatives) are conflicts predicted as non-conflicts; \textit{Precision} — proportion of predicted conflicts that are actual conflicts, $TP/(TP+FP)$; \textit{Recall} — proportion of actual conflicts correctly predicted, $TP/(TP+FN)$.


  

\subsection{Evaluation of Protocol Performance}

In this section, we compare the performance of \system against state-of-the-art deterministic databases and the traditional primary-backup system PB, using the YCSB and TPC-C benchmarks.

\textbf{YCSB Results.} We first evaluate the performance of all systems on the YCSB benchmark, where experiments are conducted with 4 worker threads under an 80/20 read/write workload. Under a uniform distribution, most transactions span multiple partitions. We report the results in Figure~\ref{fig:sample-num-robustness}(a). \system achieves the highest throughput of 260,800 txns/sec, exceeding Aria by 1.06$\times$, AriaFB by 1.07$\times$, BOHM by 1.26$\times$, PWV-A by 1.13$\times$, PWV-B by 1.30$\times$, Calvin by 1.40$\times$, and PB by 11.23$\times$. The gain mainly comes from conflict prediction, which reduces aborts and re-executions even in YCSB’s low-contention setting. AriaFB is slightly slower than Aria because the fallback overhead outweighs its limited benefit in a low-abort environment. BOHM delivers 53\% of \system’s throughput, reflecting the cost of its version maintenance scheme. Calvin shows lower throughput because its locking protocol constrains concurrency. PB performs worst due to network round-trip delays and synchronous replication latency.

We then analyze the effect of the skew factor on throughput. As shown in Figure~\ref{fig:sample-num-robustness}(b), as the skew factor increases, the throughput of all systems declines. However, \system consistently maintains high throughput. Under a skew factor of 0.999, \system reduces contention-related aborts through conflict prediction, retaining 76\% of its no-skew throughput. In comparison, Aria retains only 34\%, and AriaFB 58\%, with AriaFB benefiting from its fallback strategy under skewed workloads. BOHM and Calvin also experience throughput degradation, retaining roughly 75\% and 66\% of their no-skew performance, respectively, though their absolute throughput remains consistently lower than \system across all skew levels.

\begin{figure}[!t]
  \centering
  \includegraphics[width=\linewidth]{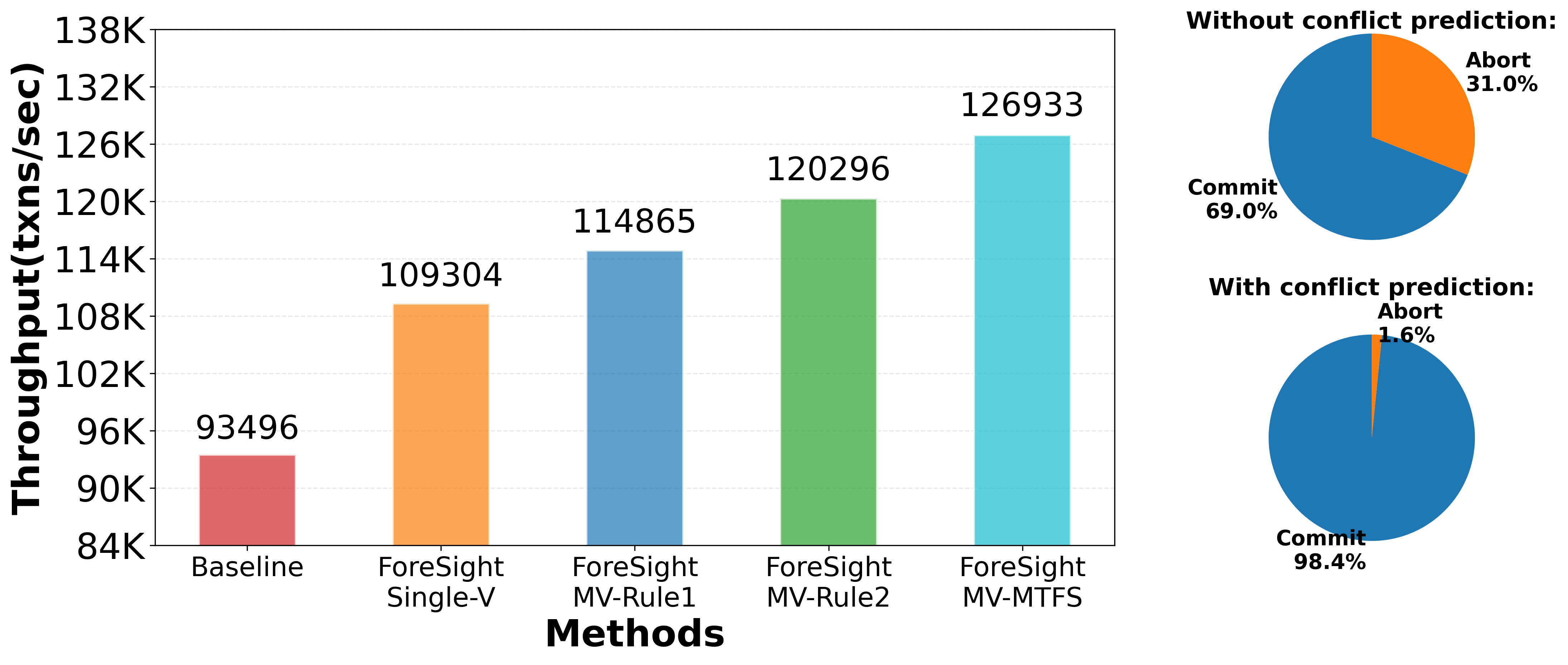}
  \caption{Ablation study of \system components on TPC-C}
  \label{fig:ablation}
\end{figure}

\textbf{TPC-C Results.} We next evaluate the performance of all systems on the TPC-C benchmark. As shown in Figure~\ref{fig:sample-num-robustness}(c), \system consistently maintains throughput above 120,000 txns/sec, significantly outperforming the other systems. Experimental results show that when the number of partitions exceeds 12, most deterministic databases plateau at around 80,000 txns/sec since each thread saturates its assigned partition under full parallelism. Aria’s throughput increases with the number of partitions due to reduced contention, eventually matching the other deterministic systems at 132 partitions, achieving 75\% of \system’s throughput. This is mainly because the TPC-C workload has certain fields that are frequently updated, which leads to high contention on these records. When the number of warehouses (partitions) is small, these updates cause more conflicts and lower performance. Systems with a fallback strategy, such as \system and AriaFB, handle conflicts more effectively, allowing their performance to remain stable even under high contention.

\textbf{The Ablation Study.} We conduct an ablation study in Figure~\ref{fig:ablation} to assess the impact of each core component in \system. The baseline system, which lacks all proposed techniques, achieves only 93{,}496 txns/sec. In contrast, the full \system configuration improves throughput to 126{,}933 txns/sec, a 1.36$\times$ improvement. \system Single-Version reaches 109{,}304 txns/sec (1.17$\times$), showing that predictive scheduling alone provides significant gains, and notably, enabling \system’s conflict prediction raises the commit rate from 0.690 to 0.984. Adding multi-version under \hyperref[rule1]{Rule 1} lifts throughput to 114{,}865 txns/sec (1.23$\times$), and relaxing it to \hyperref[rule2]{Rule 2} further raises performance to 120{,}296 txns/sec (1.29$\times$). Overall, throughput increases steadily as more components are enabled, confirming that predictive scheduling, multi-version with relaxed validation rule, and efficient reordering each contribute to performance and commit stability.

\subsection{Effectiveness of Conflict Prediction}

In this section, we evaluate the effectiveness of the conflict prediction algorithm described in Section~\ref{sec:ACP} under different conditions. There are many factors that can affect the performance of the conflict prediction algorithm, such as workload type, query complexity, and training sample size.

We evaluate the conflict prediction algorithm on TPCH and TPC-C workloads, covering both single-table and multi-table queries. The \aspn model is built on 10,000 sampled records and tested on 200 queries. As a baseline, we employ the learned abort prediction method proposed in~\cite{sheng2019scheduling}.
This method applies supervised learning by hashing transaction features and using logistic regression to predict conflicts~\cite{nasteski2017overview,lavalley2008logistic}. The model is trained on 800 labeled queries and tested on a separate set of 200 queries. Both models predict conflicts in batch mode. Evaluation metrics include accuracy~(acc), precision~(prec), recall, prediction time ($T_p$), and execution time ($T_e$), where $T_p$ measures the time required for conflict prediction, and $T_e$ measures the time required to obtain conflicts through actual query execution.

\begin{table}[!t]
  \centering
  \caption{Conflict Prediction Accuracy and Efficiency on TPCH and TPC-C Benchmarks}
  \small
  \resizebox{\linewidth}{!}{%
  \begin{tabular}{cccccccc}
  \toprule
  \textbf{Benchmark} & \textbf{Query} & \textbf{Model} & \textbf{Acc} & \textbf{Prec} & \textbf{Recall} & \textbf{$T_p$} & \textbf{$T_e$} \\
  \midrule
  \multirow{4}{*}{\textbf{TPCH}} 
    & \multirow{2}{*}{Single} 
      & \aspn & 0.999 & 1.000 & 0.994 & 0.437 & \multirow{2}{*}{19.726} \\
    &   & Baseline     & 0.723 & 0.735 & 0.899 & 0.120 &  \\
    & \multirow{2}{*}{Multi} 
      & \aspn & 0.913 & 0.884 & 0.970 & 0.400 & \multirow{2}{*}{97.128} \\
    &   & Baseline     & 0.723 & 0.729 & 0.725 & 0.184 &  \\
  \midrule
  \multirow{4}{*}{\textbf{TPC-C}} 
    & \multirow{2}{*}{Single} 
      & \aspn & 0.956 & 0.927 & 0.996 & 1.123 & \multirow{2}{*}{51.109} \\
    &   & Baseline     & 0.778 & 0.772 & 0.866 & 0.156 &  \\
    & \multirow{2}{*}{Multi} 
      & \aspn & 0.952 & 0.939 & 0.995 & 1.415 & \multirow{2}{*}{99.679} \\
    &   & Baseline     & 0.767 & 0.789 & 0.875 & 0.290 &  \\
  \bottomrule
  \end{tabular}
  } 
  \label{tab:conflict-prediction}
\end{table}

\begin{figure}
  \centering
  \includegraphics[width=\linewidth]{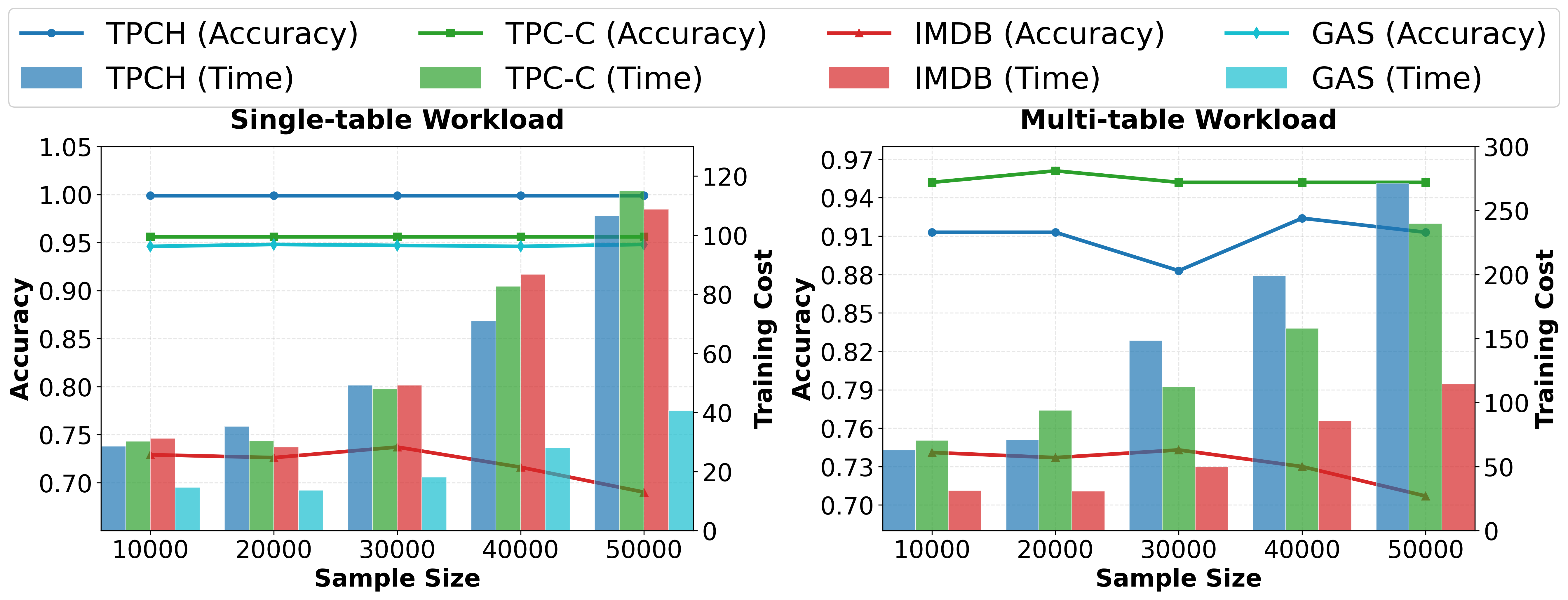}
  \caption{Impact of sample size on conflict prediction accuracy and training cost}
  \label{fig:sample-num-combined}
\end{figure}

As shown in Table~\ref{tab:conflict-prediction}, we observe that \aspn consistently outperforms baseline across all settings. For single-table queries, \aspn achieves near-perfect accuracy (99.9\% on TPCH and 95.6\% on TPC-C) and precision (100\% and 92.7\%, respectively), while maintaining high recall. In contrast, baseline exhibits significantly lower accuracy (72.3\%) and precision (73.5\%) on TPCH and fails to maintain high recall on multi-table queries. This indicates its limited capability to capture complex transactional patterns across joins or conditional logic. For multi-table queries, \aspn retains strong performance (91.3\% and 95.2\% accuracy; 88.4\% and 93.9\% precision on TPCH and TPC-C, respectively), while baseline degrades to near-random precision and recall, demonstrating its inability to generalize beyond simple access patterns. In terms of prediction time, \aspn achieves effective conflict prediction, taking 0.437 s and 1.123 s on single-table TPCH and TPC-C, and 0.400 s and 1.415 s on multi-table queries. Compared with query execution, which requires 19.7 s and 97.1 s for TPCH and 51.1 s and 96.7 s for TPC-C, the prediction overhead is negligible, accounting for only 2.2\% and 0.4\% of TPCH execution time and remaining below 3\% on TPC-C. This demonstrates that \aspn achieves high prediction accuracy while maintaining very low inference cost compared to query execution.

\begin{figure}
  \centering
  \includegraphics[width=0.95\linewidth]{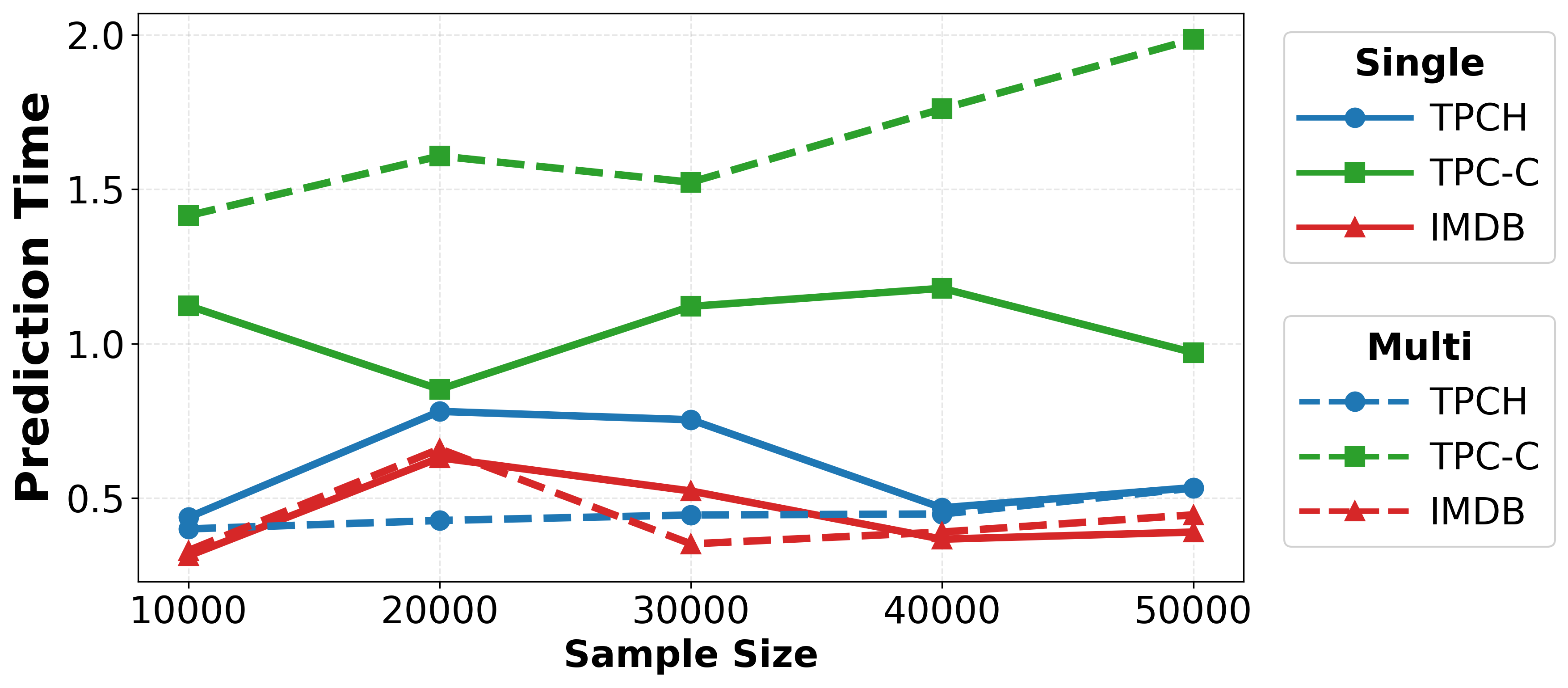}
  \caption{Impact of sample size on prediction time}
  \label{fig:sample-num-time}
\end{figure}

We next evaluate \aspn\ with training sample sizes ranging from 10{,}000 to 50{,}000 across both single-table and multi-table workloads. As shown in Figure~\ref{fig:sample-num-combined}, training costs increase with sample size and correlate with workload complexity, ranging from 14.6 to 40.6 seconds for GAS to 28.6 to 106.6 seconds for TPCH in single-table workloads. Precision remains largely stable ($\geq$94\%) across workloads, except for IMDB, which is lower ($\sim$75\%) due to its complex query patterns. Although prediction accuracy may vary across workloads, \aspn inference remains deterministic with identical samples producing identical outputs. Combined with the reorder-validation safeguard, this property guarantees execution consistency and determinism.
Figure~\ref{fig:sample-num-time} shows that prediction time grows roughly linearly with sample size but remains under 2 seconds for all workloads. This low overhead, combined with stable or improving precision, demonstrates that \aspn scales well to large datasets and remains suitable for real-time conflict prediction in high-throughput systems.

\subsection{Effectiveness of Deterministic Reordering}

In this section, we evaluate the effectiveness of the deterministic reordering algorithm \mtfs introduced in Section~\ref{sec:reorder}. The evaluation compares two implementations of \mtfs, the single-threaded baseline \system-S and the fully parallel implementation \system-M, against representative systems with reorder algorithms, including Aria~\cite{aria} and FGA~\cite{fga}. This evaluation allows us to highlight the benefits of predictive scheduling under different execution environments and workload characteristics.

We evaluate reordering effectiveness in Table~\ref{tab:reorder_effectiveness}. Under TPC-C, Aria reaches 5301.829 txns/sec with a 0.187 commit rate; FGA achieves 3625.079 txns/sec with 0.625. \system-S delivers 8187.369 txns/sec and a 0.861 commit rate, indicating fewer conflict-induced aborts. \system-M further exploits parallel execution to reach 17170.412 txns/sec—3.24$\times$ over Aria, 4.74$\times$ over FGA, and 2.1$\times$ over \system-S—without reducing the commit rate. On YCSB, where transactions are short and contention is low, the difference between \system-M and \system-S is small (2137.914 vs. 2175.957 txns/sec), showing that deterministic reordering yields larger benefits under high-contention workloads such as TPC-C.

\begin{table}[!t]
    \caption{Effectiveness of Reordering Algorithms}
    \label{tab:reorder_effectiveness}
    \small
    \begin{tabular}{lcccc}
    \toprule
    Protocol & \multicolumn{2}{c}{TPC-C} & \multicolumn{2}{c}{YCSB} \\
    \cmidrule(lr){2-3} \cmidrule(lr){4-5}
     & TPS & Commit rate & TPS & Commit rate \\
    \midrule
    Aria      & 5301.829 & 0.187 & 570.688  & 0.210 \\
    FGA       & 3625.079 & 0.625 & 1836.425 & 0.742 \\
    \system-S    & 8187.369 & 0.861 & 2175.957 & 0.707 \\
    \system-M & 17170.412 & 0.847 & 2177.914 & 0.710 \\
    \bottomrule
    \end{tabular}
\end{table}

Figure~\ref{reorder_rubustness} shows the robustness of the \mtfs reordering algorithm under varying workload skew in YCSB, comparing \system-M with Aria. The skew factor is increased from 0 (uniform distribution) to 0.999 (highly skewed) to test contention handling. With no skew, \system achieves 2177.914 txns/sec and a 0.710 commit rate, far exceeding Aria’s 570.688 txns/sec and 0.210. As skew grows, throughput drops for both systems due to more frequent conflicts. At high skew (skew factor = 0.9), \system reaches 133.474 txns/sec with a 0.315 commit rate, while Aria drops to 28.819 txns/sec and 0.017, showing that \mtfs reduces aborts and sustains higher throughput under skew.

\section{Related Work}\label{sec:related}

\system builds upon prior work in deterministic databases and transaction analysis for scheduling. This section reviews representative systems and techniques in these two areas, emphasizing their design principles and limitations.


\textit{Deterministic Database.} 
The concept of deterministic database systems can be traced back to the late 1990s~\cite{thomson2010case,narasimhai1999enforcing,jimenez2000deterministic,abadi2018overview}. 
The central idea is to enforce consistent execution across replicas by producing identical thread schedules through a deterministic scheduler~\cite{ren2012lightweight,ren2014evaluation,ren2015vll,faleiro2014lazy,DCC}. 
Representative systems such as Calvin~\cite{calvin}, BOHM~\cite{BOHM}, and PWV~\cite{PWV} adopt this approach by constructing dependency graphs or ordered locks from pre-obtained read/write sets. 
These techniques guarantee deterministic execution but restrict flexibility for workloads with data-dependent or interactive transactions. 
To avoid the overhead of dependency analysis and locking, Aria~\cite{aria} instead executes batches in parallel on consistent snapshots and deterministically validates them at commit time. 
Later work includes AriaER~\cite{ariaER}, which reduces aborts via early elimination of write-write conflicts, and DMUCCA~\cite{DMUCCA}, which reorders batches using access-pattern weights to improve throughput.
However, existing deterministic databases still face challenges such as the overhead of dependency analysis and conflict resolution, motivating the need for approaches that remain efficient under high contention.


\textit{Transaction Analysis.}
Transaction analysis techniques focus on studying transaction characteristics and dependencies to assist conflict management and scheduling. Early research shows that analyzing transaction structures can provide useful information for scheduling and improve throughput~\cite{calvin, BOHM}. 
Static approaches, such as Holistic Query~\cite{manjhi2009holistic} and QURO~\cite{yan2016leveraging}, merge operations or reorder queries based on dependency analysis. To mitigate the preprocessing cost of such methods, Zhang et al.~\cite{zhang2018performance} propose a lightweight static analysis that represents transactions and guides scheduling with reduced overhead.
Some approaches analyze dependencies at execution time. For example, the FGA algorithm~\cite{fga} groups transactions for pre-validation reordering, while DistDGCC~\cite{yao2018scaling} constructs distributed dependency graphs for cross-node transactions to improve conflict resolution. Systems such as Ic3~\cite{shasha1995transaction,wang2016scaling} dynamically maintain dependency graphs to adjust execution order in real time, providing strong serializability guarantees. In addition, learned models have been applied to transaction analysis for recognizing high-conflict transactions and guiding scheduling~\cite{sheng2019scheduling}.
While these techniques demonstrate the value of transaction analysis, challenges remain regarding overhead and adaptability. This motivates the development of lightweight methods that can support conflict avoidance and scheduling optimization in scalable systems.

\begin{figure}[!t]
    \includegraphics[width=\linewidth]{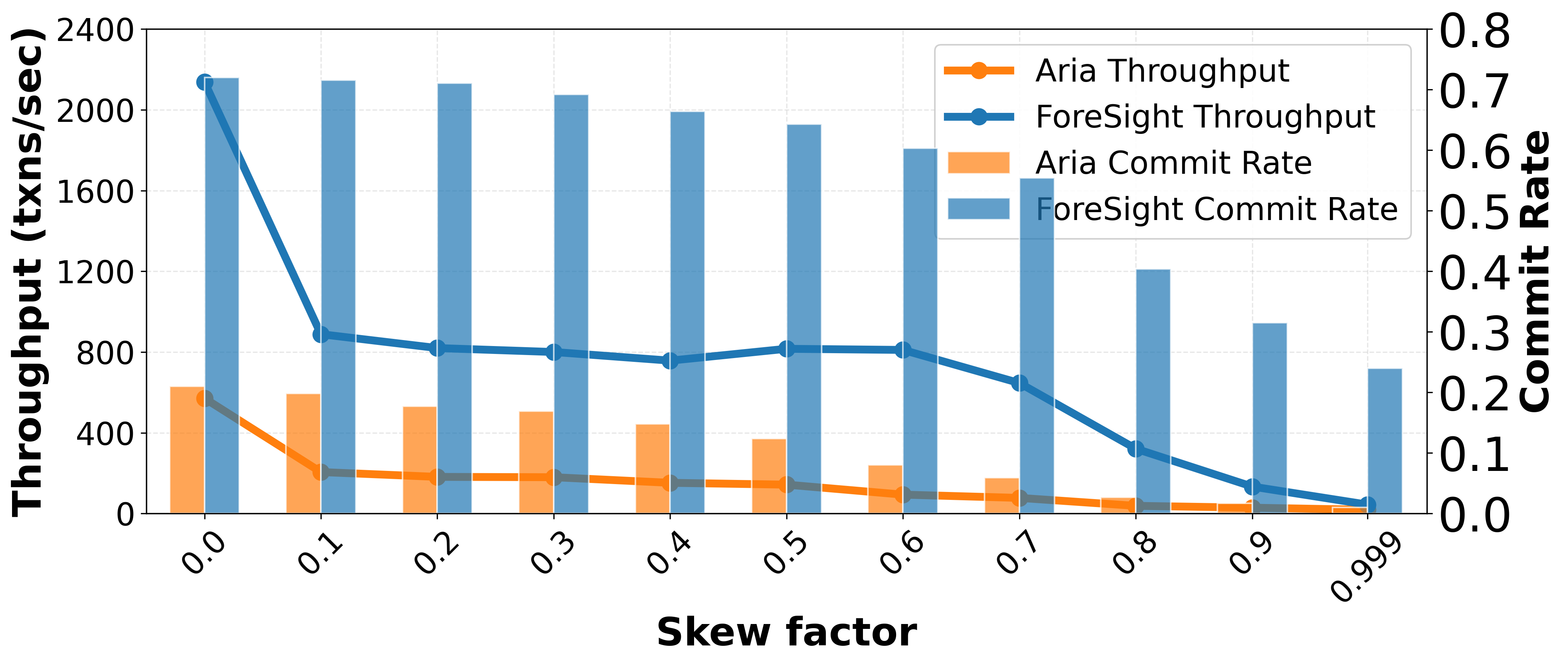}
    \caption{Robustness of reordering algorithms under skewed workloads}
    \label{reorder_rubustness}
  \end{figure}

\section{Conclusion}
\label{sec:conclusion}

In this paper, we introduced \system, a deterministic database system that achieves high performance through predictive, conflict-aware scheduling. By integrating lightweight conflict prediction, multi-version-based optimization, and efficient reordering, \system reduces aborts and improves throughput without requiring prior knowledge of transaction read/write sets. Extensive Evaluations on standard benchmarks confirm its effectiveness in handling high-contention workloads while maintaining deterministic guarantees. By introducing foresight into scheduling, \system offers a promising direction for scalable, reliable, and efficient deterministic databases. Additionally, future work will explore further optimizations in conflict prediction and scheduling strategies to enhance performance even more.

  
\clearpage

\bibliographystyle{ACM-Reference-Format}
\bibliography{sample}
  


\end{document}